\newcommand{\oneone}{1\!\!1}
\newcommand{\one}[1]{\oneone\left(#1\right)}
\newcommand{\piecewise}[1]{\left\{\begin{array}{@{\;}ll}#1\end{array}\right.}
\newcommand{\set}[1]{\{#1\}}                        
\newcommand{\setof}[2]{\{\,{#1}\::\:{#2}\,\}}        
\newcommand{\sm}{\setminus} 
\newcommand{\N}{\mathbb{N}}                     
\newcommand{\R}{\mathbb{R}}                     
\newcommand{\Z}{\mathbb{Z}}                     
\newcommand{\footcomment}[1]{} 
\newcommand{\margincomment}[1]{} 
\newcommand{\bp}{\,\rule[-0.22em]{0.08em}{1em}\,} 
\newcommand{\betTour}{\textsc{BetweennessTour}}
\newcommand{\fast}{\textsc{FAST}}
\begin{document}

\title{Faster Algorithms for Feedback Arc Set Tournament, Kemeny Rank Aggregation and Betweenness Tournament\thanks{A preliminary version of this work appeared in version 1 of the arXiv preprint \cite{Karpinski09betweenness}.}}
\titlerunning{Exact algs}  
%
\author{Marek Karpinski\inst{1}\thanks{Parts of this work done while visiting Microsoft Research.} \and Warren Schudy\inst{2}\thanks{Parts of this work done while visiting University of Bonn.}}
\authorrunning{Marek Karpinski and Warren Schudy} 
%
\tocauthor{Marek Karpinski and Warren Schudy}
\institute{University of Bonn,\\
\email{marek@cs.uni-bonn.de}
\and
Brown University,\\
\email{ws@cs.brown.edu}}

\maketitle              

\begin{abstract}
We study fixed parameter algorithms for three problems: Kemeny rank aggregation, feedback arc set tournament, and betweenness tournament. For Kemeny rank aggregation we give an algorithm with runtime $O^*(2^{O(\sqrt{OPT})})$, where $n$ is the number of candidates, $OPT \le \binom{n}{2}$ is the cost of the optimal ranking, and $O^*(\cdot)$ hides polynomial factors. This is a dramatic improvement on the previously best known runtime of $O^*(2^{O(OPT)})$. For feedback arc set tournament we give an algorithm with runtime $O^*(2^{O(\sqrt{OPT})})$, an improvement on the previously best known $O^*(OPT^{O(\sqrt{OPT})})$ \cite{Alon09}. For betweenness tournament we give an algorithm with runtime $O^*(2^{O(\sqrt{OPT/n})})$, where $n$ is the number of vertices and $OPT \le \binom{n}{3}$ is the optimal cost. This improves on the previously known $O^*(OPT^{O(OPT^{1/3})}$ \cite{Saurabh09}), especially when $OPT$ is small. Unusually we can solve instances with $OPT$ as large as $n (\log n)^2$ in polynomial time!

\keywords{Kemeny rank aggregation, Feedback arc set tournament, Fixed parameter tractability, Betweenness tournament}
\end{abstract}

\section{Introduction}

Suppose you ran a chess tournament, everybody played everybody (a.k.a.\ round robin) and you wanted to use the results to rank everybody.  Unless you were really lucky, the results would not be acyclic, so you could not just sort the players by who beat whom. A natural objective is to find a ranking that minimizes the number of upsets, where an upset is a pair of players where the player ranked lower in the ranking beat the player ranked higher. Minimizing the number of upsets is called \emph{feedback arc set problem} on tournaments (\fast). The complementary problem of \emph{maximizing} the number of pairs that are \emph{not upsets} is called the \emph{maximum acyclic subgraph problem} on tournaments. These problems are NP-hard~\citep{Ailon08aggregating,Alon06,Charbit07} (see also \citep{Conitzer06a}), but a polynomial-time approximation scheme (PTAS) \cite{Mathieu09fast} is known.

In statistics and psychology, one motivation is \emph{ranking by paired comparisons} \citep{Slater61}: here, you wish to sort some set by some objective but you do not have access to the objective, only a way to compare a pair and see which is greater; for example, determining people's preferences for types of food. This problem attracted computational attention as early as 1961 \citep{Slater61} (for comparison Hoare published quicksort the same year). Feedback arc set tournament and closely related problems have also been used in machine learning~\citep{Cohen99,Ailon08aggregating}.

The \fast{} problem can be generalized to a problem we call \emph{weighted \fast{}}, sometimes known as \emph{feedback arc set with probability constraints}.
The input is a complete directed graph with arc weights $\set{w_{uv}}_{u,v}$ with $w_{uv} + w_{vu} = 1$ for every pair of vertices $u,v$. In other words a weighted \fast{} instance is a convex combination of unweighted \fast{} instances.

We study the \emph{parameterized complexity} of this problem, in particular the parameter $OPT$, the cost of an optimal ranking.  We use the notation $O^*(\cdot)$ to hide factors that are polynomial in the input size, that is $f(I) \in O^*(g(I))$ iff $f(I) \le g(I) |I|^c$ for some $c>0$ and all sufficiently large inputs $I$. The first fixed-parameter algorithms for feedback arc set tournament had runtime $O^*(2^{O(OPT)})$ and later algorithms made a dramatic improvement to $O^*(OPT^{O(\sqrt{OPT})})$ \cite{Alon09}. We improve this to $O^*(2^{O(\sqrt{OPT})})$.

\begin{theorem} \label{thm:exactFAST}
There exists a deterministic parameterized subexponential
   algorithm for weighted \fast{} with runtime $2^{O(\sqrt{OPT})} + n^{O(1)}$. A variant of the algorithm uses $OPT^{O(\sqrt{OPT})} + n^{O(1)}$ time and $n^{O(1)}$ space.
\end{theorem}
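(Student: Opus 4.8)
The plan is to pin an exact dynamic program to a cheap \emph{reference ordering} and to exploit the fact that an optimal ordering cannot drift far from it. Throughout, write $\mathrm{cost}(\pi)$ for the total weight of the backward arcs of an ordering $\pi$, so the task is to find a $\pi$ of minimum cost $OPT$. First I would compute, in $n^{O(1)}$ time, a reference ordering $\sigma$ with $\mathrm{cost}(\sigma) = O(OPT)$ — any constant-factor approximation, or simply an ordering that is locally optimal under single-vertex moves, suffices. The crucial observation is a per-cut \emph{drift} bound: for every threshold $i$, the prefixes $\{v : \sigma(v) \le i\}$ and $\{v : \pi^*(v) \le i\}$ of $\sigma$ and of an optimal ordering $\pi^*$ differ in only $O(\sqrt{OPT})$ vertices. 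Indeed, the two prefixes have equal size, so exactly $t$ vertices cross the cut left-to-right and $t$ cross right-to-left; each left-to-right crosser is discordant with each right-to-left crosser, giving at least $t^2$ discordant pairs; and every discordant pair $\{u,w\}$ contributes $w_{uw}+w_{wu}=1$ to $\mathrm{cost}(\sigma)+\mathrm{cost}(\pi^*)$, whence $t^2 \le \mathrm{cost}(\sigma)+\mathrm{cost}(\pi^*)=O(OPT)$ and $t = O(\sqrt{OPT})$.

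Next I would build $\pi^*$ position by position, sweeping $i$ from $0$ to $n$ and letting the state record only how the already-placed set deviates from the reference prefix $\{v : \sigma(v)\le i\}$ — that is, which ``borrowed'' vertices (those with $\sigma(v) > i$) have been placed early and which ``deferred'' vertices (those with $\sigma(v)\le i$) are still outstanding. By the drift bound this symmetric difference has size $O(\sqrt{OPT})$ at every cut, and the incremental backward-arc weight paid when appending the next vertex is a function of the current state and of $\sigma$, so a forward recursion over these states computes a minimum-cost ordering. Correctness is immediate, since the optimal $\pi^*$ is itself one path through the state graph.

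The step I expect to be the main obstacle — and the one separating the target $2^{O(\sqrt{OPT})}$ from the earlier $OPT^{O(\sqrt{OPT})}$ of \cite{Alon09} — is controlling the \emph{number} of states, not merely their per-cut size. Naively a state must name \emph{which} $O(\sqrt{OPT})$ vertices are borrowed or deferred, and since a borrowed vertex could in principle be dragged in from far ahead in $\sigma$, this costs $\binom{n}{O(\sqrt{OPT})} = OPT^{O(\sqrt{OPT})}$ choices even after kernelizing to $n = O(OPT)$. Removing this factor requires showing that the relevant vertices may always be drawn from a window of only $O(\sqrt{OPT})$ candidates around the current cut, so that a state is just a \emph{subset} of a bounded window and there are $2^{O(\sqrt{OPT})}$ of them; establishing this localization (and thereby eliminating the stray logarithmic factor in the exponent) is the technical heart of the argument.

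The remaining claims then fall out. Storing the full recursion table costs $2^{O(\sqrt{OPT})}$ space and gives the $2^{O(\sqrt{OPT})} + n^{O(1)}$ time bound; a recompute-on-demand variant that instead enumerates the $O(\sqrt{OPT})$ relevant vertices at each cut avoids storing the table, yielding the $OPT^{O(\sqrt{OPT})} + n^{O(1)}$ time, $n^{O(1)}$ space tradeoff. Finally, nothing above uses integrality of the weights — only that $\mathrm{cost}$ is a sum of arc weights and that each discordant pair carries total weight $w_{uv}+w_{vu}=1$ — so the entire argument applies verbatim to weighted \fast{}.
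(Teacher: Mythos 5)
Your setup matches the paper's: a constant-factor reference ranking, a Kendall--Tau drift bound ($t^2$ discordant pairs among the $t$ left-to-right and $t$ right-to-left crossers, each pair contributing total weight $1$ to $C(\sigma)+C(\pi^*)$, so $t=O(\sqrt{OPT})$ — this is exactly the $|L|=|R|\le\sqrt{d(\pi,\pi')}$ argument in the paper's Lemma \ref{lem:fastSVMlandscape}), and a prefix dynamic program over deviations from the reference. However, the proof is incomplete at exactly the step you flag as ``the technical heart'': you never establish that the uncertain vertices at each cut can be drawn from a pool of only $O(\sqrt{OPT})$ candidates, and without that your state count is $\binom{n}{O(\sqrt{OPT})}=OPT^{O(\sqrt{OPT})}$, which only reproduces the bound of Alon et al., not the claimed $2^{O(\sqrt{OPT})}$. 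The per-cut drift bound alone cannot close this, since it says nothing about \emph{which} vertices are borrowed or deferred. Announcing that the needed localization ``requires showing'' something is not a proof of that something.

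The paper closes this gap with a per-vertex, not per-cut, argument. It assigns each vertex a radius $r(v)=4\sqrt{2C(\pi^1)}+2b(\pi^1,v,\pi^1(v))$ mixing a uniform $O(\sqrt{OPT})$ term with the local cost of $v$ in the reference ranking, and proves $|\pi^*(v)-\pi^1(v)|\le r(v)$ (Lemma \ref{lem:feClose}). That proof needs two ingredients you do not have: (i) a landscape-stability lemma (Lemma \ref{lem:fastSVMlandscape}) saying that the single-vertex-move cost curves $b(\cdot,v,p)$ of two rankings at Kendall--Tau distance $D$ differ by at most $2\sqrt{D}$ pointwise, and (ii) the fragility inequality $|\pi^*(v)-\pi^1(v)|\le b(\pi^1,v,\pi^*(v))+b(\pi^1,v,\pi^1(v))$ (each vertex strictly between the two positions contributes $w_{uv}+w_{vu}=1$), combined with local optimality of $\pi^*$. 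Then, since $\sum_v b(\pi^1,v,\pi^1(v))=2C(\pi^1)$, at most $O(\sqrt{OPT})$ vertices are ``pricey'' enough to have a large radius, and all others lie in a window of width $O(\sqrt{OPT})$ around the cut (Lemma \ref{lem:psiSmallFAST}); this is what bounds the number of valid prefix sets by $n\,2^{O(\sqrt{OPT})}$. Your remaining claims (kernelization to separate $n$ from $OPT$, and the divide-and-conquer space-efficient variant) are consistent with the paper, but the theorem is not proved until the localization step is supplied.
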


The \emph{Exponential time hypothesis} (ETH) \cite{Impagliazzo01} is that 3-SAT cannot be solved in time $2^{o(\text{number of variables})}$. We also give a matching lower bound assuming the ETH:

\begin{theorem} \label{thm:exactLB}
There does not exist a parameterized algorithm for weighted \fast{} with runtime $O^*(2^{o(\sqrt{OPT})})$ unless the exponential time hypothesis \cite{Impagliazzo01} is false.
\end{theorem}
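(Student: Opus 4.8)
The plan is to prove this ETH-conditional lower bound by a polynomial-time reduction from 3-SAT, exploiting the fact that for weighted \fast{} the parameter $OPT$ is only \emph{quadratically} larger than the number of vertices. First I would recall the standard strengthening of the ETH obtained from the Sparsification Lemma of Impagliazzo, Paturi and Zane \cite{Impagliazzo01}: the ETH implies that 3-SAT cannot be decided in time $2^{o(m)}$, where $m$ is the number of clauses (after sparsification one may assume $m = O(n)$, so $2^{o(m)}$ and $2^{o(n)}$ coincide). This is the form of the hypothesis I would contradict.

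The key leverage is the trivial bound $OPT \le \binom{N}{2} = O(N^2)$ for any weighted \fast{} instance on $N$ vertices, since each of the $\binom{N}{2}$ pairs contributes backward weight at most $1$ to the cost of any ranking. Consequently, if one can reduce a 3-SAT instance with $m$ clauses, in polynomial time, to a weighted \fast{} instance on only $N = O(m)$ vertices for which satisfiability is equivalent to $OPT$ lying below an efficiently computable threshold, then on the reduced instance $\sqrt{OPT} = O(N) = O(m)$. A hypothetical algorithm running in $O^*(2^{o(\sqrt{OPT})})$ time would then decide 3-SAT in time $2^{o(\sqrt{OPT})}\cdot \mathrm{poly}(N) = 2^{o(m)}$, contradicting the ETH and establishing the theorem.

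It therefore remains to supply the reduction, and this is where I would concentrate the effort. I would either invoke one of the known NP-hardness constructions for \fast{} and Kemeny aggregation \cite{Ailon08aggregating,Alon06,Charbit07} and verify that it uses only $O(n+m)$ vertices, or build a direct gadget reduction allocating $O(1)$ vertices per variable and per clause, using weight-$1$ arcs to rigidly fix the relative order of gadget vertices while the remaining weights encode truth assignments and clause satisfaction as local ordering costs. The faithfulness argument would show that a satisfying assignment yields a ranking of cost exactly $t$, while conversely any ranking of cost at most $t$ can be decoded into a satisfying assignment.

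The hard part will be the \emph{linearity} of the reduction: to obtain the tight $\sqrt{OPT}$ bound I need $N = O(m)$, so the gadgets must be of constant size and must not inflate the vertex count by more than a constant factor, while remaining expressive enough to encode 3-SAT with a clean threshold on $OPT$. Any superlinear vertex count, say $N = m^{1+\delta}$, would weaken the conclusion to ruling out only algorithms of type $2^{o(\sqrt{OPT}/m^{\delta})}$, so controlling the vertex count tightly — rather than merely establishing NP-hardness — is the crux of the argument.
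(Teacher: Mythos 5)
Your high-level strategy is exactly the one the paper uses: exploit the trivial bound $OPT \le \binom{N}{2}$ so that $\sqrt{OPT} = O(N)$, and then feed an ETH-hard problem through a reduction with only linear blow-up in the vertex count, so that a $2^{o(\sqrt{OPT})}$ algorithm becomes a $2^{o(N)}$ algorithm for the source problem. The difference is that you start from 3-SAT and propose to build (or locate) a linear-size gadget reduction to weighted \fast{}, whereas the paper routes through vertex cover: under the ETH, vertex cover on $n$ vertices has no $2^{o(n)}$ algorithm (a standard consequence of sparsification, cited to \cite{Flum06}); Karp's reduction from vertex cover to feedback arc set produces a digraph on $2n$ vertices; and that sparse digraph becomes a weighted \fast{} instance on the same vertex set by replacing each non-adjacent pair with opposite arcs of weight $1/2$, which add the same constant to the cost of every ranking and hence preserve the optimizer. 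Then $OPT = O(n^2)$ and the contradiction follows.

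The gap in your write-up is precisely the step you yourself flag as the crux: the linear reduction is never actually supplied. Moreover, your fallback of ``invoke one of the known NP-hardness constructions \cite{Ailon08aggregating,Alon06,Charbit07} and verify it uses $O(n+m)$ vertices'' is riskier than it sounds: those hardness proofs for \emph{unweighted} \fast{} go through blow-ups and probabilistic/derandomized constructions and are not obviously linear, so checking linearity is real work, not a formality. For the \emph{weighted} problem the difficulty evaporates because of the weight-$1/2$ padding trick: any sparse feedback arc set instance on $N$ vertices is already a weighted \fast{} instance on $N$ vertices, so all you need is an ETH-hard problem that reduces to feedback arc set with linear vertex count -- which is what vertex cover provides via Karp. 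Your plan is salvageable (indeed, composing 3-SAT $\to$ vertex cover $\to$ feedback arc set $\to$ weighted \fast{} realizes your ``direct gadget reduction''), but as written the central construction is a promissory note rather than a proof.
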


We leave open the possibility that \emph{unweighted} \fast{} may admit an exact algorithm with runtime $O^*(2^{o(\sqrt{OPT})})$.

We note that independently of this work Uri Feige \cite{Feige09} gave an unrelated algorithm for \emph{unweighted} \fast{} matching Theorem \ref{thm:exactFAST}.

\medskip

An important application of weighted feedback arc set tournament is {\em rank aggregation}. Frequently, one has access to several rankings of objects of some sort, such as search engine outputs~\citep{Dwork01}, and desires to aggregate the input rankings into a single output ranking that is similar to all of the input rankings: it should have minimum {average distance} from the input rankings, for some notion of distance. This ancient problem was already studied in the context of voting by \citep{Borda} and \citep{Condorcet} in the 18$^{th}$ century, and has aroused renewed interest recently~\citep{Dwork01,Conitzer06b}.  A natural notion of distance is the number of pairs of vertices that are in different orders, which is known as the Kendall-Tau distance. This defines the {\em Kemeny rank aggregation} problem (KRA) \citep{Kemeny59,Kemeny62}. This choice yields a maximum likelihood estimator for a certain na\"\i{}ve Bayes model~\citep{Young95}. This problem is NP-hard~\citep{Bartholdi89}, even with only four voters \citep{Dwork01}, and has a PTAS \cite{Mathieu09fast}.

We denote the \emph{average} distance between the optimal ranking and the input rankings by $OPT \le \binom{n}{2}$. Two parameters have attacted the bulk of the study: $OPT$ and the average Kendall-Tau distance between the input rankings. It is easy to see (triangle inequality) that these two parameters are within a constant factor of each other, so these parameters give equivalent runtimes up to constants in the exponent. All previous work give algorithms with runtime $O^*(2^{O(OPT)})$ \cite{Betzler09}. There is a standard reduction from Kemeny rank aggregation to weighted \fast{} \cite{Ailon08aggregating,Coppersmith06,Mathieu09fast}, so we improve the best known parameterized algorithm for KRA dramatically to $O^*(2^{O(\sqrt{OPT})})$ as a corollary of our Theorem \ref{thm:exactFAST}.

\begin{corollary} \label{thm:exactKRA}
Let $n$ be the number of candidates and $OPT \le \binom{n}{2}$ the optimum value. There exists a deterministic parameterized subexponential algorithm for Kemeny Rank Aggregation with runtime and space $2^{O(\sqrt{OPT})} + n^{O(1)}$. A variant uses $OPT^{O(\sqrt{OPT})} + n^{O(1)}$ time and $n^{O(1)}$ space.
\end{corollary}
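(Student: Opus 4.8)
The plan is to apply the known cost-preserving reduction from Kemeny rank aggregation to weighted \fast{} and then invoke Theorem~\ref{thm:exactFAST}. First I would set up the reduction explicitly. Given input rankings $\pi_1,\dots,\pi_k$ over the $n$ candidates, I build a weighted \fast{} instance on the same vertex set by letting $w_{uv}$ be the fraction of the input rankings that place $u$ before $v$. Since every input ranking orders each pair in exactly one direction, we have $w_{uv}+w_{vu}=1$, so this is a legal weighted \fast{} instance, namely the convex combination of the $k$ unweighted instances induced by the individual rankings. Computing all $\binom{n}{2}$ weights requires one pass over the input, which takes time polynomial in the input size and is absorbed into the $n^{O(1)}$ term.

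The key step is to verify that the reduction preserves the objective value \emph{exactly}, not merely up to a constant. For any candidate ranking $\sigma$ the weighted \fast{} cost is $\sum_{u <_\sigma v} w_{vu}$, the total weight of backward arcs. Expanding the definition of $w_{vu}$ and exchanging the order of summation shows this equals $\frac{1}{k}\sum_{i=1}^{k} d(\sigma,\pi_i)$, where $d(\cdot,\cdot)$ denotes Kendall-Tau distance: each input ranking $\pi_i$ contributes weight $1/k$ for every pair it orders oppositely to $\sigma$, and summing over all pairs recovers the average distance that KRA seeks to minimize. Hence the two problems have identical objective on every ranking $\sigma$, so in particular their optimal values coincide and the parameter $OPT$ is literally the same quantity in both instances.

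Finally I would apply Theorem~\ref{thm:exactFAST} to the constructed weighted \fast{} instance. Because $OPT$ transfers unchanged, the $2^{O(\sqrt{OPT})}+n^{O(1)}$ runtime and space bound follow at once, and the polynomial-space variant of the corollary inherits the $OPT^{O(\sqrt{OPT})}+n^{O(1)}$ time, $n^{O(1)}$ space variant of the theorem. I do not expect a genuine obstacle here: the entire content of the corollary is the exact cost-preservation of the reduction, and the only point demanding care is confirming that $OPT$ matches \emph{exactly}, so that the subexponential exponent $\sqrt{OPT}$ is inherited verbatim rather than inflated by a multiplicative constant that a lossy reduction might otherwise introduce.
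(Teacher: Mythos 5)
Your proposal is correct and takes exactly the route the paper intends: the paper proves the corollary by simply invoking the standard cost-preserving reduction to weighted \fast{} (citing \cite{Ailon08aggregating,Coppersmith06,Mathieu09fast}) and applying Theorem~\ref{thm:exactFAST}, and your write-up just fills in the details of that reduction, correctly identifying that the weights $w_{uv}$ are the fraction of input rankings placing $u$ before $v$ and that the objective is preserved exactly so the parameter $OPT$ transfers verbatim.
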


Some other paramters have attracted attention. The parameter of maximum Kendall-Tau distance has been studied but yield bounds no tighter (up to constants in the exponent) than is known for the average Kendall-Tau distance \cite{Betzler09}. Another parameter is the maximum $r_{\max}$, over candidates $c$ and pairs of voters $v_1,v_2$, of the absolute difference between the rank of $c$ in $v_1$ and $v_2$. The best runtime known is $O^*(2^{O(r_{\max})})$ \cite{Betzler09}.

\bigskip

In the Betweenness problem we are given a ground set of \emph{vertices} and a set of \emph{betweenness constraints}
involving $3$ vertices and a \emph{designated} vertex among them.
The objective function of a ranking of the elements is the number of betweenness constraints
for which the designated vertex is not between the other two vertices. The goal is to
\emph{minimize} the objective function. 
For the status of the general Betweenness
problem, see e.g.\ \cite{Opatrny79,Chor98,Ailon07hardness,Charikar09}.
We refer to the Betweenness problem in tournaments, that is in instances with a constraint for every triple of vertices,
as the \betTour{} problem (see \cite{Ailon07hardness}). This problem is NP-hard \cite{Ailon07hardness} and has a recently discovered polynomial-time approximation scheme \cite{Karpinski09betweenness}. We study its parameterized complexity.

\begin{theorem} \label{thm:exactGen}
There exists a randomized parameterized subexponential
   algorithm for \betTour{} with runtime and space $2^{O(\sqrt{OPT/n})} \cdot n^{O(1)}$, where $n$ is the number of vertices and $OPT$ is the cost of the optimal ranking. It succeeds with constant probability.
\end{theorem}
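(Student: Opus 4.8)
The plan is to reduce \betTour{} to a problem structurally analogous to weighted \fast{} and then adapt the subexponential machinery underlying Theorem~\ref{thm:exactFAST}. The central observation should be that, just as in the \fast{} case, an optimal ranking cannot move any vertex too far from its position in a good initial ordering: if a vertex participates in few violated betweenness constraints, it is approximately correctly placed. So first I would establish a \emph{positional displacement bound}. Concretely, I expect to show that there is an efficiently computable ranking (for instance one derived from a constant-factor or PTAS approximation, using \cite{Karpinski09betweenness}) such that in \emph{some} optimal ranking every vertex lies within distance $O(\sqrt{OPT/n})$ of its position in the approximate ranking. The factor $1/n$ in the exponent, as opposed to the bare $\sqrt{OPT}$ of \fast{}, is the key quantitative difference and must come from the fact that each of the $n$ vertices participates in $\Theta(n^2)$ betweenness triples, so the ``average'' cost attributable to a single vertex scales like $OPT/n$ rather than $OPT$.

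Once the displacement bound is in hand, the second step is a \emph{dynamic program over a bandwidth-limited arrangement}. Having fixed the approximate ranking, I would process vertices left to right and maintain, as the DP state, the set of vertices that have been ``seen'' but whose relative order is not yet committed; the displacement bound guarantees this active window has width $O(\sqrt{OPT/n})$, so the number of reachable states is $2^{O(\sqrt{OPT/n})}$ up to polynomial factors. The subtlety relative to \fast{} is that betweenness constraints couple \emph{three} vertices rather than two, so the DP transition must account for triples straddling the window; I would argue that because the designated-vertex-between condition depends only on the relative order of the three vertices, and any triple with all three vertices far apart is already determined by the bandwidth structure, the cost contribution of each committed vertex can be computed from the local window state alone. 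This yields runtime and space $2^{O(\sqrt{OPT/n})} \cdot n^{O(1)}$.

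The randomization and the ``constant probability'' in the statement most naturally enter through the construction of the initial approximate ranking or through a random restriction used to guarantee the displacement bound holds simultaneously for all vertices; I would use a randomized rounding or sampling step and show the displacement bound holds with constant probability, so that repeating a constant number of times (or accepting constant success probability) suffices.

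I expect the main obstacle to be the third-order coupling in the cost function. In \fast{} the objective decomposes into pairwise terms that align cleanly with a left-to-right sweep, but for betweenness a single constraint involves a middle vertex and two endpoints that may be widely separated, so it is not immediate that the violated-constraint count restricted to the active window captures the full incremental cost. The delicate part will be proving that triples with one or more vertices outside the $O(\sqrt{OPT/n})$-window contribute a cost that is either forced by the arrangement or can be pre-accounted exactly, so that the DP remains both correct and confined to the claimed state space. Establishing the $OPT/n$ scaling in the displacement bound rigorously — rather than the naive $\sqrt{OPT}$ — is where I would concentrate the technical effort, since that is precisely what delivers the striking consequence that instances with $OPT$ up to $n(\log n)^2$ are solvable in polynomial time.
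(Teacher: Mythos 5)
Your high-level plan coincides with the paper's (approximate initial ranking, displacement bound, bandwidth-limited dynamic program), but the three steps you leave open or state loosely are exactly where the work lies, and one of them is stated incorrectly. First, the uniform displacement bound you posit --- every vertex within $O(\sqrt{OPT/n})$ of its position in the approximate ranking --- is false in general: a single vertex incident to $\Theta(OPT)$ worth of violated constraints in $\pi^1$ may need to move $\Theta(OPT/n)$ positions. The paper instead uses per-vertex radii $r(v) = \alpha_1\sqrt{C(\pi^1)/n} + \alpha_2\, b(\pi^1,v,\pi^1(v))/n$ (Lemma \ref{lem:feCloseGen}, proved via the fragility bound of Lemma \ref{lem:fragile} and the stability bound of Lemma \ref{lem:bChangeFirst}), and then separately shows (Lemma \ref{lem:psiSmallGen}) that only $O(\sqrt{OPT/n})$ vertices are ``pricey'' enough for the second term to matter; it is this counting argument, not a uniform bound, that keeps the DP window width at $O(\sqrt{OPT/n})$. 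Second, you assume the initial ranking is positionally close to an optimum, but for betweenness a constant-factor approximation in cost gives no Kendall--Tau guarantee whatsoever --- reversing a ranking preserves its betweenness cost exactly --- so closeness cannot be derived from approximation quality as it is for \fast{} (where $d(\pi,\pi')\le C(\pi)+C(\pi')$). The paper needs Theorem \ref{thm:nearby}, a statement about the internals of the PTAS of \cite{Karpinski09betweenness} producing $O(1)$ candidate rankings one of which is simultaneously a cost $2$-approximation and within Kendall--Tau distance $O(OPT/n)$ of an optimum; this is also where the randomization and the loop over candidates in Algorithm \ref{alg:exactGen} come from.

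Third, the triple-straddling issue you correctly flag as the main obstacle has a specific resolution that you do not supply: in a left-to-right DP over prefixes $S$, a betweenness constraint with exactly one vertex $q\notin S$ is already fully determined, because $q$ is forced to come after everything in $S$, so its contribution depends only on the relative order of its two vertices inside $S$. The paper therefore optimizes the surrogate objective $C'(\pi)$, the cost of all constraints with at most one vertex outside the prefix, which has the optimal-substructure property and a recurrence whose increment upon appending $v$ sums $c(u\mapsto 1 \bp v\mapsto 2\bp q\mapsto 3)$ over $u\in S\sm\set{v}$ and $q\in V\sm S$ (Lemma \ref{lem:DPGen}). Without this (or an equivalent) exact accounting scheme the DP is not well-defined, so as written your argument identifies the right difficulties but does not yet constitute a proof.
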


The previously best known runtime was $O^*(2^{O(OPT^{1/3} \log OPT)})$ \cite{Saurabh09}. Our result is better by a logarithmic factor in the exponent for the largest possible $OPT = \Theta(n^3)$ and even better for smaller $OPT$. Interestingly we can solve all instances with $OPT=O(n \log^2 n)$ in polynomial time!

Our results easily generalize to all fully dense ranking CSPs of arity three with \emph{fragile} constraints as introduced by \cite{Karpinski09betweenness}. For simplicity we limit ourselves to the well-known problems discussed above.

\medskip

We now outline the organization of our paper. Section \ref{sec:fast} discusses weighted feedback arc set tournament, including our algorithm (Section \ref{sec:fastAlg}), analysis (\ref{sec:fastAnalysis}), and lower bound (\ref{sec:lb}). Section \ref{sec:bet} discusses our results for betweenness tournament, including our algorithm (Section \ref{sec:betAlg}) and analysis (\ref{sec:betAnalysis}).

\section{Feedback arc set tournament} \label{sec:fast}

\subsection{Algorithm} \label{sec:fastAlg}

We now outline some of our key techniques. Firstly any two low-cost rankings for a FAST problem are nearby in Kendall-Tau distance. Secondly two rankings that are Kendall-Tau distance $D$ apart are equivalent to within additive $O(\sqrt{D})$ in how good each position for each a vertex is (Lemma \ref{lem:fastSVMlandscape}). Thirdly most vertices (in a low-cost instance) have a vee-shaped cost versus position curve and optimal rankings are locally optimal so we know that each vertex belongs at the bottom of its curve. The uncertainty in this curve by $\sqrt{D}$ causes an uncertainty in the optimal position also around $\sqrt{D}$ (Lemmas \ref{lem:feClose} and \ref{lem:psiSmallFAST}). Our algorithm simply computes uncertainties $r(v)$ in the positions of all of the vertices $v$ and solves a dynamic program for the optimal ranking that is near a particular constant-factor approximate ranking. We remark that Braverman and Mossel \cite{Braverman08} and Betzler et al.~\cite{Betzler08,Betzler09} previously applied dynamic programming to \fast{} and KRA.

First we state some core notation. Throughout this paper let $V$ refer to the set of objects (vertices) being ranked and $n$ denote $|V|$. Our $O(\cdot)$ hides absolute constants only. Our $O^*(\cdot)$ hides a polynomial in $n$. A \emph{ranking} is a bijective mapping from a set $S \subseteq V$ to $\set{1,2,3,\ldots,|S|}$. We call $\pi(v)$ the position of $v$ in the ranking $\pi$. We let $d(\pi,\pi')$ denote the Kendall-Tau distance between rankings $\pi$ and $\pi'$, i.e. the number of pairs of vertices in different orders in the two rankings.
An \emph{ordering} is an injection from $S$ into $\R$. We use $\pi$ and $\sigma$ (with superscripts) to denote rankings  and orderings respectively.

The input to weighted \fast{} is a set $V$ of vertices and arc weights $\set{w_{uv}}_{u,v \in V}$ such that $w_{uv} + w_{vu} = 1$ for all $u,v \in V$. 
The \fast{} objective function is the weight of the backwards arcs $C(\pi)=\sum_{u,v \in V : \pi(v)>\pi(u)} w_{vu}$. For ranking $\pi$, vertex $v \in V$ and $p \in \R$ (with $\pi(u) \ne p$ for all $u \ne v$) we define $b(\pi,v,p) = \sum_{u \ne v} \piecewise{w_{vu} & \text{if } p > \pi(u) \\ w_{uv} &  \text{if } p < \pi(u)}$, i.e.\ the cost of the arcs incident to $v$ in the ordering formed by moving $v$ to position $p$ in $\pi$.
Let $\pi^*$ denote an optimal ranking and $OPT=C(\pi^*)$ its cost.

\begin{algorithm}[t]
Input: Vertex set $V$, arc weights $\set{w_{uv}}_{u,v\in V}$.
\begin{algorithmic}[1]
\STATE Sort by weighted indegree \cite{Coppersmith06}, yielding ranking $\pi^1$ of $V$.
\STATE Set $r(v) = 4\sqrt{2C(\pi^1)} + 2b(\pi^1,v,\pi^1(v))$ for all $v \in V$.
\STATE Use dynamic programming or divide-and-conquer (Details: Lemma \ref{lem:DP}) to find the optimal ranking $\pi^2$ with $|\pi^2(v) - \pi^1(v)| \le r(v)$ for all $v$.
\end{algorithmic}
\caption{Exact algorithm for FAST. If dynamic programming is used in the last line the runtime and space are both $n^{O(1)} 2^{O(\sqrt{OPT})}$. If divide-and-conquer is used the runtime is $n^{O(\sqrt{OPT})}$ and the space is $n^{O(1)}$.}
\label{alg:exactFAST}
\end{algorithm}

\medskip

Before running our main Algorithm \ref{alg:exactFAST} we compute a small \emph{kernel}, that is a smaller instance with the same optimal cost as the input instance (up to a known shift). This preliminary step allows us to separate the dependence on $n$ and $OPT$ in the runtime, yielding the runtime stated in Theorem \ref{thm:exactFAST}.

Dom et al.\ \cite{Dom06} give an algorithm for computing kernels of \emph{unweighted} FAST instances with $O(OPT^2)$ vertices. This was later improved to $O(OPT)$ vertices by Bessy et al. \cite{Bessy09}. There is a kernelization algorithm for Kemeny rank aggregation in \cite{Betzler09}, but it produces an instance of size $O(\text{(Number of voters)} \cdot OPT)$, not the desired $OPT^{O(1)}$. To get the desired kernel for general weighted \fast{} we consider a slight variant of the algorithm from \cite{Dom06}.

\begin{lemma}\label{lem:kernel}
There is polynomial-time computable $O(OPT^2)$-vertex kernel for weighted FAST. 
\end{lemma}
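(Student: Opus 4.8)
The plan is to reduce the weighted instance to its \emph{majority tournament} $T$ and then adapt the two reduction rules underlying the unweighted quadratic kernel of \cite{Dom06}. Define $T$ by orienting $u \to v$ whenever $w_{uv} \ge 1/2$ (breaking ties arbitrarily), and let $b$ be the (unweighted) minimum feedback arc set size of $T$. The first step is to show $b = O(OPT)$. Indeed, the set of pairs that $\pi^*$ orders against their majority direction is a feedback arc set of $T$, so it has size at least $b$; and each such pair contributes its heavier weight, which is $\ge 1/2$, to $C(\pi^*) = OPT$. Hence $b \le 2\,OPT$, so it suffices to produce a kernel with $O(b^2)$ vertices.

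Next I would run the weighted analogues of the two rules until none applies. \textbf{Rule 1} deletes any vertex $v$ lying in no directed triangle of $T$; intuitively such a $v$ has its majority in-neighbors $N^-(v)$ cleanly separated from its out-neighbors $N^+(v)$ and can be placed canonically between them. \textbf{Rule 2} identifies a majority arc lying in more than $b$ triangles no two of which share another arc: such an arc must be reversed by every feedback arc set of size $\le b$ (otherwise hitting all those triangles would already cost more than $b$), so we may commit to ordering that pair against its majority direction, record the induced additive change in cost, and decrease the remaining budget. As in \cite{Dom06}, exhaustive application of these rules leaves a tournament with only $O(b^2)$ directed triangles: every triangle contains at least one of the $\le b$ reversed arcs, and the number of triangles per arc is controlled by Rule 2. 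Since Rule 1 guarantees every surviving vertex lies in a triangle, at most $O(b^2) = O(OPT^2)$ vertices remain, and every step runs in polynomial time.

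The main obstacle is the correctness of Rule 1 in the weighted setting — specifically, that deleting a triangle-free vertex changes $OPT$ by a \emph{computable additive constant} and that an optimal ranking of the reduced instance lifts back to an optimal ranking of the original. In the unweighted case this is immediate because such a vertex contributes no backward arc; with fractional weights one must instead argue, via an exchange argument using the majority-acyclicity around $v$, that some optimal ranking places all of $N^-(v)$ before all of $N^+(v)$, so that inserting $v$ into the resulting gap realizes the per-pair minimum cost $\sum_{u} \min(w_{vu}, w_{uv})$ without disturbing the relative order of the other vertices; this quantity is the additive shift to be recorded. A milder subtlety is verifying the forcing claim of Rule 2 for fractional weights and carefully bookkeeping the accumulated shifts, so that the final kernel has the same optimal cost as the input up to the recorded constant, as the lemma requires.
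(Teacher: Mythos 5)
Your overall strategy is the same as the paper's: pass to the majority tournament, adapt the two reduction rules of \cite{Dom06} (delete triangle-free vertices; force arcs lying in too many triangles), and count triangles to bound the number of surviving vertices by $O(OPT^2)$. Your justification of Rule 1 --- the exchange argument exploiting that no majority arc runs from $N^+(v)$ to $N^-(v)$, so some optimal ranking respects the block order $N^-(v),v,N^+(v)$ and $v$ contributes exactly $\sum_{u}\min(w_{uv},w_{vu})$ --- is precisely the argument the paper leaves implicit behind ``it is easy to see.''

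The gap is in Rule 2, where you set the threshold to $b$, the minimum feedback arc set size of the unweighted majority tournament $T$. First, $b$ is NP-hard to compute, so the rule as stated is not a polynomial-time reduction. Second, and more seriously, $b$ is the wrong quantity: your forcing argument shows that every FAS of $T$ of size \emph{at most} $b$ must contain an arc lying in more than $b$ triangles, but the set of majority arcs reversed by an optimal \emph{weighted} ranking is merely \emph{some} FAS of $T$ --- its size is at least $b$, and is only bounded above by $2OPT$. An arc in $b+1$ triangles therefore need not be reversed by the weighted optimum: declining to reverse it costs more than $b/2$ in other reversed majority arcs, which does not contradict optimality since $OPT \ge b/2$ always holds. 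The threshold must instead be an efficiently computable upper bound on $2OPT$; the paper takes $2U$ where $U$ is the cost of the $5$-approximate ranking of \cite{Coppersmith06} (so $OPT \le U \le 5OPT$), whence an arc in more than $2U \ge 2OPT$ triangles forces any non-reversing ranking to pay more than $OPT$. With that substitution (and the bookkeeping you already describe: record the payment $w_{uv}$, then set $w_{uv}=0$ and $w_{vu}=1$ so the instance remains a weighted FAST instance), your triangle count goes through essentially verbatim and yields the claimed $O(OPT^2)$ bound.
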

\begin{proof}[sketch]
Let $OPT \le U \le 5OPT$ be the cost of a 5-approximate ranking \cite{Coppersmith06}.

We say that an arc is a \emph{majority arc} if it has greater weight than its reverse, with ties broken arbitrarily. A majority arc clearly has weight at least 1/2. The \emph{majority tournament} \cite{Ailon08aggregating} is the unweighted directed graph with vertex set $V$ and arc set equal to the majority arcs.

Our kernelization algorithm is simple: we apply the following two reduction rules, which are extensions of two reduction rules in \cite{Dom06}, as often as possible.

The first reduction rule is eliminating a vertex that is part of no cycles of three arcs in the majority tournament. Consider some such vertex $v$. It is easy to see that there exists an optimal ranking that puts every predecessor of $v$ (in the majority tournament) before $v$ and every successor of $v$ after $v$, while implies the validity of this rule.

The second reduction rule concerns an arc $(u,v)$ of the majority tournament that is in more than $2U$ cycles of three arcs in the majority graph. Any feedback arc set not not paying for such an arc must pay for more than $2OPT$ other arcs of the majority tournament, each of cost at least 1/2, and hence cannot be optimal. Therefore we record that we must pay $w_{uv}$, then set weight $w_{uv}$ to zero and $w_{vu}$ to one.

Now we argue that the resulting instance after these two rules are exhaustively applied has $O(OPT^2)$ vertices.
An optimal feedback arc set, which necessarily has cost $OPT$, can include at most $2OPT$ majority arcs. 
Each such majority arc is in at most $10OPT$ triangles by the second rule, so there are at most $20OPT^2$ triangles.
Finally by the first rule every vertex is in a triangle, so there are at most $60OPT^2$ vertices.
\end{proof}

We have not investigated whether or not the $O(OPT)$ vertex kernel for unweighted FAST \cite{Bessy09} can be extended to weighted FAST.

\subsection{Analysis}  \label{sec:fastAnalysis}

Variants of the following Lemma are given in \cite{Mathieu09fast} and \cite{Karpinski09betweenness}. We give a simplified proof here for completeness.

\begin{lemma}[\cite{Mathieu09fast,Karpinski09betweenness}]\label{lem:fastSVMlandscape}
Let $\pi$ and $\pi'$ be rankings over $V$. It follows that $|b(\pi, v, p) - b(\pi', v, p)| \le 2 \sqrt{d(\pi,\pi')}$ for all $v \in V$ and $p \in \R \sm \Z$.
\end{lemma}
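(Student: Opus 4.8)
The plan is to write the difference $b(\pi,v,p) - b(\pi',v,p)$ as a sum of per-vertex contributions and observe that only vertices which ``cross'' position $p$ contribute anything. First I would note that for each $u \ne v$ the term of $u$ in $b(\pi,v,p)$ equals $w_{vu}$ if $\pi(u) < p$ and $w_{uv}$ if $\pi(u) > p$ (here $p \notin \Z$ guarantees no tie, since positions are integers), and likewise for $\pi'$. Hence the contribution of $u$ cancels in the difference unless $u$ lies on opposite sides of $p$ in the two rankings, in which case it contributes $\pm(w_{vu}-w_{uv})$, of absolute value at most $w_{vu}+w_{uv}=1$. Letting $A$ be the set of $u \ne v$ with $\pi(u) < p < \pi'(u)$ and $B$ the set with $\pi'(u) < p < \pi(u)$, this yields immediately
\[
|b(\pi,v,p) - b(\pi',v,p)| \le |A| + |B|.
\]

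It then remains to show $|A| + |B| \le 2\sqrt{d(\pi,\pi')}$. The natural first attempt is to observe that every pair $(a,b) \in A \times B$ is ordered oppositely in $\pi$ and $\pi'$ — since $a$ precedes $p$ while $b$ follows it in $\pi$, and the reverse holds in $\pi'$ — giving $d(\pi,\pi') \ge |A|\,|B|$. Together with an AM--GM step this is almost enough, but it is too weak precisely when $|A| \ne |B|$. The clean fix is to include $v$ itself in the bookkeeping: let $X$ (resp.\ $Y$) be the set of \emph{all} vertices of $V$, now including $v$, that move from below to above (resp.\ above to below) position $p$. Because the number of vertices placed below position $p$ equals $\lfloor p \rfloor$ in both $\pi$ and $\pi'$ (it depends only on $p$, not on the ranking), the two ``below-$p$'' sets have equal size, and therefore $|X| = |Y|$ exactly.

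Every pair in $X \times Y$ is discordant by the same argument as before, so $d(\pi,\pi') \ge |X|\,|Y| = |X|^2$. Since $A \subseteq X$ and $B \subseteq Y$, I would then conclude
\[
|A| + |B| \le |X| + |Y| = 2|X| \le 2\sqrt{d(\pi,\pi')},
\]
which combined with the first display finishes the proof. The one subtle point — and the main obstacle — is this parity/balancing issue: restricting attention to $u \ne v$ gives only $\bigl| |A| - |B| \bigr| \le 1$ and the insufficient bound $d(\pi,\pi') \ge |A|\,|B|$, so the argument genuinely needs the observation that whenever $A$ and $B$ differ in size it is $v$ that accounts for the discrepancy by crossing $p$, which is exactly what reintroducing $v$ into the sets $X,Y$ captures.
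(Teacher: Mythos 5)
Your proof is correct and follows essentially the same route as the paper: bound the difference by the number of vertices that cross position $p$ between the two rankings, and bound that number by $\sqrt{d(\pi,\pi')}$ via the observation that every left-to-right crosser is discordant with every right-to-left crosser. The one place you diverge is worth noting: the paper defines the crossing sets $L,R$ inside $V \setminus \set{v}$ and asserts $|L|=|R|$ by bijectivity, but that equality can fail by one exactly when $v$ itself crosses $p$; your fix of counting crossers over all of $V$ (so that $|X|=|Y|$ holds unconditionally, since both rankings place $\floor{p}$ vertices below $p$) and then using $A \subseteq X$, $B \subseteq Y$ is the cleaner and fully rigorous version of the same argument. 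So you have not only reproduced the paper's proof but quietly patched its one loose step.
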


\begin{proof}
Fix $v \in V$, $p \in \R \sm \Z$ and rankings $\pi$,$\pi'$. 
Consider the sets of vertices $L=\setof{u \in V \sm \set{v}}{\pi(u)<p<\pi'(u)}$ and $R=\setof{u \in V \sm \set{v}}{\pi'(u)<p<\pi(u)}$. Intuitively these are the vertices that cross $p$ from left to right (resp. right to left) when going from $\pi$ to $\pi'$. It follows easily from the definition of $b$ that $|b(\pi, v, p) - b(\pi', v, p)| \le |L|+|R|$, so we now proceed to bound $|L|$ and $|R|$.

The bijective nature of $\pi$ and $\pi'$ implies that $|L|=|R|$. Observe that all vertices in $L$ are before all vertices in $R$ in $\pi$, and vice versa for $\pi'$, hence $d(\pi,\pi') \ge |L||R|$. Putting these facts together proves the Lemma.
\end{proof}

\begin{lemma}\label{lem:feClose}
In Algorithm~\ref{alg:exactFAST} we have  $|\pi^*(v) - \pi^1(v)| \le r(v)$ for all $v \in V$ and any optimal ranking $\pi^*$ of $V$.
\end{lemma}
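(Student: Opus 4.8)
The plan is to fix a single vertex $v$ and bound its displacement $|\pi^*(v) - \pi^1(v)|$ by charging the vertices that ``cross'' $v$ between the two rankings against the two terms of $r(v)$. Write $p^1 = \pi^1(v)$ and $p^* = \pi^*(v)$, and let $A_1$ (resp.\ $A_*$) be the set of vertices placed before $v$ in $\pi^1$ (resp.\ $\pi^*$), so $p^* - p^1 = |A_*| - |A_1| \le |A_* \sm A_1|$. It therefore suffices to bound $s = |A_* \sm A_1|$, the number of vertices lying after $v$ in $\pi^1$ but before $v$ in $\pi^*$; the reverse situation $p^1 > p^*$ is symmetric, working with $A_1 \sm A_*$ instead.

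The first key step is a charging identity that converts this position count into a cost statement. Each $u \in A_* \sm A_1$ sits after $v$ in $\pi^1$, so by definition it contributes exactly $w_{uv}$ to $b(\pi^1, v, p^1)$; it sits before $v$ in $\pi^*$, so it contributes exactly $w_{vu}$ to $b(\pi^*, v, p^*)$. All these terms are nonnegative and $w_{uv} + w_{vu} = 1$, so summing over $u \in A_* \sm A_1$ yields $s \le b(\pi^1, v, p^1) + b(\pi^*, v, p^*)$. I expect this per-vertex bookkeeping to be the only genuine subtlety: the main obstacle is conceptual, namely reconciling positions measured in two different rankings, and this identity is exactly what resolves it.

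The second step removes the unknown ranking $\pi^*$ from the bound. Since $\pi^*$ is optimal, $v$ occupies a cost-minimizing position, so $b(\pi^*, v, p^*) = \min_q b(\pi^*, v, q)$. Picking a non-integer minimizer $q^*$ of $b(\pi^1, v, \cdot)$ and invoking Lemma~\ref{lem:fastSVMlandscape} gives $b(\pi^*, v, p^*) \le b(\pi^*, v, q^*) \le b(\pi^1, v, q^*) + 2\sqrt{d(\pi^1,\pi^*)} \le b(\pi^1, v, p^1) + 2\sqrt{d(\pi^1,\pi^*)}$. This is precisely where the square-root saving appears: the landscape lemma trades the large, unknown quantity $b(\pi^*, v, p^*)$ for $b(\pi^1, v, p^1)$ plus only a $\sqrt{\cdot}$ overhead.

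Finally I would bound $d(\pi^1, \pi^*)$. For any pair ordered differently by the two rankings the two incident back-arc weights sum to exactly $1$, whereas concordant pairs contribute nonnegatively, so $d(\pi^1,\pi^*) \le C(\pi^1) + C(\pi^*) = C(\pi^1) + OPT \le 2 C(\pi^1)$, using the optimality of $\pi^*$. Chaining the three steps gives $|\pi^*(v) - \pi^1(v)| \le 2\, b(\pi^1, v, p^1) + 2\sqrt{2 C(\pi^1)} \le r(v)$, comfortably within the stated bound (the extra slack in the coefficient of the square-root term absorbs any boundary care needed in the landscape lemma's non-integer hypothesis).
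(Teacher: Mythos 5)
Your proof is correct and follows essentially the same route as the paper's: the displacement is charged against $b$-values using $w_{uv}+w_{vu}=1$, Lemma~\ref{lem:fastSVMlandscape} together with the local optimality of $\pi^*$ transfers everything onto $\pi^1$, and $d(\pi^1,\pi^*)\le 2C(\pi^1)$ closes the bound. The only (cosmetic) difference is that you split the crossing count between $b(\pi^1,v,p^1)$ and $b(\pi^*,v,p^*)$ and invoke the landscape lemma once, whereas the paper charges both terms to $b(\pi^1,v,\cdot)$ and applies it twice --- which is why you end up with slack ($2\sqrt{2C(\pi^1)}$ versus the $4\sqrt{2C(\pi^1)}$ built into $r(v)$).
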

\begin{proof}
The weight of an arc and its reverse sum to one so $d(\pi^*, \pi^1) \le C(\pi^*) + C(\pi^1) \le 2C(\pi^1)$. By Lemma \ref{lem:fastSVMlandscape} therefore 
\begin{equation}
|b(\pi^*, v, j + 1/2) - b(\pi^1, v, j +1/2)| \le 2\sqrt{2C(\pi^1)} \label{eqn:feClose}
\end{equation}
for any $j \in \Z$.

Fix $v \in V$. We conclude
\begin{align*}
\lefteqn{|\pi^*(v) - \pi^1(v)|} \\
&\le b(\pi^1, v, \pi^*(v)) + b(\pi^1, v, \pi^1(v)) && \text{($w_{uv}+w_{vu}=1$ for all $u$)}\\
 &=  b(\pi^1, v, \pi^*(v) + 1/2) + b(\pi^1, v, \pi^1(v) + 1/2) && \text{($\pi^1$ is integral)}\\
 &\le b(\pi^*, v, \pi^*(v) + 1/2) + 2 \sqrt{2C(\pi^1)} + b(\pi^1, v, \pi^1(v) + 1/2)&& \text{(By (\ref{eqn:feClose}))} \\
 &\le b(\pi^*, v, \pi^1(v) + 1/2) + 2 \sqrt{2C(\pi^1)} + b(\pi^1, v, \pi^1(v) + 1/2) && \text{(Optimality of }\pi^*\text{)} \\
&\le 4 \sqrt{2C(\pi^1)} + 2b(\pi^1, v, \pi^1(v) + 1/2) && \text{(By (\ref{eqn:feClose}))} \\
& = r(v) && \text{(Definition of $r(v)$)}
.\end{align*}
\end{proof}

\begin{lemma}\label{lem:psiSmallFAST}
In Algorithm~\ref{alg:exactFAST} we have $\max_{j \in \Z} |\setof{v \in V}{|\pi^1(v) - j| \le r(v)}| = O(\sqrt{OPT})$.
\end{lemma}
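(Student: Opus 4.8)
The plan is to fix an arbitrary integer $j$ and bound the size of the set $S_j = \setof{v \in V}{|\pi^1(v) - j| \le r(v)}$ by $O(\sqrt{OPT})$ with a constant independent of $j$. I would rely on two preliminary facts. First, since sorting by weighted indegree is a constant-factor approximation \cite{Coppersmith06}, we have $C(\pi^1) = O(OPT)$, so it suffices to prove the bound with $\sqrt{C(\pi^1)}$ in place of $\sqrt{OPT}$. Second, I would record the ``budget'' identity $\sum_{v \in V} b(\pi^1, v, \pi^1(v)) = 2 C(\pi^1)$, which holds because $b(\pi^1, v, \pi^1(v))$ is the total weight of the backward arcs incident to $v$ and each backward arc is incident to exactly two vertices.

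Write $R_0 = 4\sqrt{2 C(\pi^1)}$ and $b_v := b(\pi^1, v, \pi^1(v))$, so that $r(v) = R_0 + 2 b_v$ and $R_0 = O(\sqrt{OPT})$. The heart of the argument is a split of $S_j$ according to the size of the vertex's own cost $b_v$, using the threshold $2 b_v \le R_0$ versus $2 b_v > R_0$. For the \emph{cheap} vertices ($2 b_v \le R_0$) we have $r(v) \le 2 R_0$, so their positions $\pi^1(v)$ all lie in the window $[j - 2R_0, j + 2 R_0]$; since $\pi^1$ is a ranking its values are distinct integers, so at most $4 R_0 + 1 = O(\sqrt{OPT})$ such vertices can qualify. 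For the \emph{expensive} vertices ($2 b_v > R_0$, i.e.\ $b_v > 2\sqrt{2C(\pi^1)}$) I would discard the constraint involving $j$ altogether and bound their number by a Markov/counting argument: by the budget identity at most $\frac{2 C(\pi^1)}{2\sqrt{2 C(\pi^1)}} = \sqrt{C(\pi^1)/2} = O(\sqrt{OPT})$ vertices can have cost exceeding $2\sqrt{2 C(\pi^1)}$, regardless of $j$.

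Adding the two bounds yields $|S_j| = O(\sqrt{OPT})$ uniformly in $j$, which is the claim. The only genuinely delicate point is choosing the threshold so that both halves land at scale $\sqrt{OPT}$ simultaneously: the cheap half is controlled by the pigeonhole fact that a ranking uses each integer position exactly once (so a window of width $O(\sqrt{OPT})$ contains $O(\sqrt{OPT})$ vertices), while the expensive half is controlled by the total cost budget $O(OPT)$ divided by the per-vertex threshold $\Omega(\sqrt{OPT})$. The constant term $R_0 = 4\sqrt{2 C(\pi^1)}$ built into the definition of $r(v)$ is precisely what makes this threshold balance, so beyond verifying these two elementary estimates I expect no further obstacle.
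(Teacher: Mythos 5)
Your proof is correct and follows essentially the same route as the paper's: both split the set at a threshold on the per-vertex cost $b(\pi^1,v,\pi^1(v))$, bound the cheap vertices by pigeonhole on the distinct integer positions in a window of width $O(\sqrt{C(\pi^1)})$, and bound the expensive ones by the budget identity $\sum_v b(\pi^1,v,\pi^1(v)) = 2C(\pi^1)$ together with $C(\pi^1)=O(OPT)$ from the constant-factor approximation. The only difference is the exact cutoff (the paper uses $2b_v > \sqrt{2C(\pi^1)}$ rather than your $2b_v > 4\sqrt{2C(\pi^1)}$), which is immaterial.
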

\begin{proof}
Fix $j \in \Z$. Let $R = \setof{v \in V}{|\pi^1(v) - j| \le r(v)}$, the cardinality of which we are trying to bound. We say $v \in V$ is \emph{pricey} if $2b(\pi^1,v,\pi^1(v)) > \sqrt{2C(\pi^1)}$. Clearly $2 C(\pi^1) = \sum_v b(\pi^1, v, \pi^1(v)) \ge (\text{number pricey}) \frac{1}{2}\sqrt{2C(\pi^1)}$ hence the number of pricey vertices is at most $\frac{2 C(\pi^1)}{(1/2)\sqrt{2C(\pi^1)}}=2\sqrt{2C(\pi^1)}$. All non-pricey vertices in $R$ have $|\pi^1(v) - j| \le r(v) \le 5 \sqrt{2C(\pi^1)}$, so at most $10 \sqrt{2C(\pi^1)} + 1$ non-pricey vertices are in $R$. We conclude $|R| \le 12 \sqrt{2C(\pi^1)} + 1 = O(\sqrt{OPT})$ since $\pi^1$ is a 5-approximation \cite{Coppersmith06}.
\end{proof}

\begin{lemma}\label{lem:DP}
There is a dynamic program for \fast{} that finds the optimal ranking $\pi^2$ with $|\pi^2(v) - \pi^1(v)| \le r(v)$ for all $v$ using space and runtime $O(|V|^2) 2^{\psi}$, where $\psi = \max_j |\setof{v \in V}{|\pi^1(v) - j| \le r(v)}|$. A divide and conquer variant uses $|V|^{O(\psi)}$ time and $|V|^{O(1)}$ space.
\end{lemma}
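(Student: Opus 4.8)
The plan is to build the ranking $\pi^2$ one position at a time from left to right, exploiting the fact that the constraint $|\pi^2(v)-\pi^1(v)| \le r(v)$ forces the feasible positions of each vertex $v$ to be a contiguous block of integers, namely those $j$ with $|\pi^1(v)-j| \le r(v)$. The key structural observation is the following. Fix a boundary between positions $p-1$ and $p$. Call a vertex \emph{forced-left} if all of its feasible positions are $\le p-1$, \emph{forced-right} if all are $\ge p$, and \emph{straddling} otherwise. A straddling vertex has a feasible position on each side of the boundary, and since its feasible positions are contiguous this forces both $p-1$ and $p$ to be feasible for it; in particular every straddling vertex lies in $\setof{v}{|\pi^1(v)-p| \le r(v)}$, so there are at most $\psi$ of them. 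Moreover, in any feasible ranking the set of vertices occupying positions $1,\ldots,p-1$ consists of exactly the forced-left vertices together with some subset $S$ of the straddlers, since a forced-right vertex cannot appear before position $p$.

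First I would set up the dynamic program. Its states, indexed by a boundary $p \in \set{1,\ldots,n+1}$, are the possible subsets $S$ of the at-most-$\psi$ straddlers at $p$; there are at most $2^\psi$ of these. The value stored is the minimum cost of the arcs internal to positions $1,\ldots,p-1$ over feasible partial rankings realizing that subset. By the observation above, the full set $P$ of placed vertices is determined by $S$ (it is the forced-left set together with $S$), so the transition from a state at boundary $p$ to one at boundary $p+1$ simply chooses the vertex $x$ to occupy position $p$ among the unplaced vertices feasible at $p$, and adds the incremental cost $\sum_{u \in P} w_{xu}$. This increment is computable in polynomial time (precomputing the contribution of the forced-left vertices and adding the $O(\psi)$ straddler terms online). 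Since $C(\pi^2)$ telescopes as the sum of these increments, the final table entry gives $C(\pi^2)$, and a standard traceback recovers the ranking. There are $O(n)$ boundaries, at most $2^\psi$ states each, and polynomial work per transition, giving runtime and space $|V|^{O(1)} 2^\psi$; careful bookkeeping yields the stated $O(|V|^2)2^\psi$.

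For the divide-and-conquer variant I would trade space for time by recursing on position ranges rather than tabulating. To rank a vertex set $U$ into a contiguous range of $|U|$ positions, pick the midpoint $m$ of the range and enumerate every partition of $U$ into a lower part $U_L$ and an upper part $U_R$ that respects feasibility and the required sizes. As before $U_L$ must contain all forced-left vertices and exclude all forced-right ones, so the partitions differ only in how the $\le \psi$ straddlers at $m$ are split, giving at most $2^\psi$ choices. For each, add the cross cost $\sum_{u \in U_L,\, v \in U_R} w_{vu}$ (all of $U_L$ precedes all of $U_R$) and recurse on the two halves, keeping the best. The recursion has depth $O(\log n)$ and stores only the current range, the partition being tried, and the running best, so the space is $|V|^{O(1)}$; the time satisfies $T(N) \le 2^{\psi+1}\, T(N/2) + |V|^{O(1)}$, which unrolls to $N^{O(\psi)} = |V|^{O(\psi)}$.

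The main obstacle is the structural claim that the partial ranking on positions $1,\ldots,p-1$ is determined by a subset of only $\psi$ straddlers: this is what keeps the state space at $2^\psi$ and, crucially, lets the incremental cost be evaluated without remembering the entire prefix. Everything else — the telescoping of the objective, the per-transition cost evaluation, and the recursion bookkeeping — is routine once that observation and the contiguity of the feasible intervals are in hand.
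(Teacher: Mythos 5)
Your proposal is correct and is essentially the same argument as the paper's: your ``forced-left vertices plus a subset of the at most $\psi$ straddlers at boundary $p$'' is exactly the paper's notion of a \emph{valid} prefix set, your left-to-right transition with incremental cost $\sum_{u\in P} w_{xu}$ is the paper's recurrence for $\bar C(S)$ read in the forward direction, and your midpoint-partition recursion is the divide-and-conquer variant the paper defers to Dom et al. The only difference is presentational — you spell out the contiguity/straddler argument and the divide-and-conquer details slightly more explicitly than the paper does.
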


\begin{proof}
Say that a set $S \subseteq V$ is \emph{valid} if it contains all vertices $v$ with $\pi^1(v) \le |S| - r(v)$ and no vertex $v$ with $\pi^1(v) > |S| + r(v)$. Observe that for any $s \in \N$ all valid sets of size $s$ agree except for the presence or absence of $\psi$ vertices.  Therefore there are at most $n 2^\psi$ valid sets.

We say that a ranking $\pi$ of valid set $S$ is \emph{valid} if $\setof{v}{\pi(v) \le j}$ is a valid set for all $0 \le j \le |S|$. It is easy to see that a ranking $\pi$ is valid if and only if satisfies $|\pi(v) - \pi^1(v)| \le r(v)$ for all $v$.

One can easily see the following optimal substructure property: prefixes of an optimal  valid ranking are optimal valid rankings themselves.

For any valid set $S$ let $\bar C(S)$ denote the cost of the optimal valid ranking of $S$. The recurrence relation is
\[
\bar C(S) = \min_{v \in S : S \sm \set{v} \text{ is valid}} \left[ \bar C(S \sm \set{v}) + \sum_{u \in S \sm \set{v}} w_{vu} \right]
.\]

The space-efficient variant evaluates $\bar C$ using divide and conquer instead of dynamic programming, similar to \cite{Dom06}. Details deferred.
\end{proof}

Now we put the pieces together and prove Theorem~\ref{thm:exactFAST}.
\begin{proof}[of Theorem~\ref{thm:exactFAST}]
The kernelization algorithm of Lemma \ref{lem:kernel} allows us to assume without loss of generality that $n=O(OPT^2)$. Algorithm \ref{alg:exactFAST} returns an optimal ranking by Lemmas  \ref{lem:feClose} and \ref{lem:DP}. Lemmas \ref{lem:psiSmallFAST} and \ref{lem:DP} allow us to bound the runtime and space requirements of the dynamic program.
\end{proof}

\subsection{Lower bound}  \label{sec:lb}


\begin{proof}[of Theorem \ref{thm:exactLB}]
For sake of contradiction suppose we have an algorithm for weighted FAST with runtime $2^{o(\sqrt{OPT})}$. We present a series of reductions which converts such an algorithm into a subexpontial-time algorithm for vertex cover, the existence of which is known to contradict the ETH \cite{Flum06}.

Let an instance of vertex cover with $n$ vertices be given. Applying Karp's reduction from vertex cover to feedback arc set \cite{Karp72} produces a feedback arc set instance with $2n$ vertices. Finally one can reduce this to a weighted FAST instance with the same number of vertices by representing incomparable pairs of vertices by opposite arcs of weight 1/2. The result is an weighted FAST instance with $2n$ vertices that is equivalent to the original vertex cover instance. The optimal cost for this instance is at most its number of arcs, which is $O(n^2)$, so the hypothesized algorithm has runtime $2^{o(\sqrt{OPT})}=2^{o(n)}$. This runtime is subexponential, contradicting the ETH.
\end{proof}

\section{Betweenness tournament}  \label{sec:bet}

\subsection{Algorithm}  \label{sec:betAlg}

We now introduce some new notation for the betweenness problem. We let $\binom{n}{k}$ (for example) denote the standard binomial coefficient and $\binom{V}{k}$ denote the set of subsets of set $V$ of size $k$. For any ordering $\sigma$ let $\text{Ranking}(\sigma)$ denote the ranking naturally associated with $\sigma$.

Let $v \mapsto p$ denote the ordering over $\set{v}$ which maps $v$ to $p$.
For set $Q$ of vertices and ordering $\sigma$ with domain including $Q$ let $Q \mapsto \sigma$ denote the ordering over $Q$ which maps $u \in Q$ to $\sigma(u)$, i.e.\ the restriction of $\sigma$ to $Q$. For orderings $\sigma^1$ and $\sigma^2$ with disjoint domains let  $\sigma^1\bp \sigma^2$ denote the natural combined ordering over $Domain(\sigma^1) \cup Domain(\sigma^2)$. For example of our notations, $Q \mapsto \sigma \bp v \mapsto p$ denotes the ordering over $Q \cup \set{v}$ that maps $v$ to $p$ and $u \in Q$ to $\sigma(u)$.

A ranking 3-CSP consists of a ground set $V$ of \emph{vertices} and a \emph{constraint system} $c$, where $c$ is a function from rankings of 3 vertices to $[0,1]$. For brevity we henceforth abuse notation and and write $c(\text{Ranking}(\sigma))$ by $c(\sigma)$. The objective of a ranking CSP is to find an ordering $\sigma$ (w.l.o.g.\ a ranking) minimizing $C(\sigma) = \sum_{S \in \binom{\text{Domain}(\sigma)}{k}} c(S \mapsto \sigma)$. We will only ever deal with one constraint system $c$ at a time, so we leave the dependence of $C$ on $c$ implicit in our notations. Abusing notation we sometimes refer to $S \subseteq V$ as a \emph{constraint}, when we really are referring to $c(S \mapsto \cdot)$.
Clearly one can model \betTour{} as a ranking 3-CSP.

Let $b(\sigma, v, p) = \sum_{Q : \cdots} c(Q \mapsto \sigma \bp v \mapsto p)$, where the sum is over sets $Q \subseteq Domain(\sigma) \sm \set{v}$ of size 2. Note that this definition is valid regardless of whether or not $v$ is in $\text{Domain}(\sigma)$. The only requirement is that the range of $\sigma$ excluding $\sigma(v)$ must not contain $p$. This ensures that the argument to $c(\cdot)$ is an ordering (injective). 

Our algorithm and analysis for \betTour{} are analogous to our results for \fast{} with two major differences. Firstly no kernel for betweenness tournament is known, which hurts the runtime somewhat. Secondly we use a more complicated approach to get the preliminary constant-factor approximation ranking $\pi^1$.
We use the known PTAS \cite{Karpinski09betweenness} with an appropriate error parameter to get a 2-approximation. Our analysis requires not only that $\pi^1$ be of \emph{cost} comparable to $\pi^*$ but also that it be close in \emph{Kendall-Tau distance}. Fortunately the analysis of the PTAS from \cite{Karpinski09betweenness} supports the following theorem.

\begin{theorem}[\cite{Karpinski09betweenness}] \label{thm:nearby}
There exists a polynomial-time algorithm for \betTour{} that produces a set $\Pi$ of $O(1)$ rankings. With constant probability one of the rankings $\pi\in\Pi$ satisfies  $d(\pi,\pi^*)=O(OPT/n)$ and has cost at most $2C(\pi^*)$, where $\pi^*$ is some optimal ranking.
\end{theorem}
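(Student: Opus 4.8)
The plan is to obtain both guarantees from the existing PTAS of \cite{Karpinski09betweenness} rather than building a new algorithm from scratch: the cost bound is exactly what a multiplicative approximation delivers, and the Kendall-Tau bound will be extracted from that PTAS's internal analysis. First I would instantiate the PTAS with a constant accuracy parameter chosen small enough that its output has cost at most $2\,C(\pi^*) = 2\,OPT$. Since that PTAS is randomized and, for fully-dense arity-three CSPs with fragile constraints such as \betTour{}, produces a constant-sized collection of candidate rankings one of which is good with constant probability, this immediately yields the set $\Pi$ of $O(1)$ rankings together with the claimed cost guarantee and success probability. The only real content is therefore the distance bound $d(\pi,\pi^*) = O(OPT/n)$ for the good candidate $\pi$.

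For the distance bound I would reopen the PTAS analysis and track \emph{where each vertex is placed}, not merely the resulting cost. The crucial point is that, unlike a generic $2$-approximation, the PTAS commits to \emph{absolute} positions: it places each vertex into a narrow position window inferred from a sample, so it cannot return something like the reversal of $\pi^*$ (which has the same betweenness cost but maximal Kendall-Tau distance). This absolute-position commitment is exactly what lets us fix a single optimal ranking $\pi^*$ aligned with the output. Concretely, I would argue that the analysis already controls the total displacement $\sum_v |\pi(v)-\pi^*(v)|$, and then convert this Spearman footrule distance into a Kendall-Tau distance using the standard fact that the two agree up to a factor of two.

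The $1/n$ savings is where full density enters, and isolating it cleanly is the heart of the argument. Each pair of vertices lies in $n-2$ betweenness constraints, so a single inverted pair between $\pi$ and $\pi^*$ is ``witnessed'' by $\Theta(n)$ constraints; fragility (every transposition of a satisfied triple violates it) is what promotes this witnessing into an actual charge against the budget $C(\pi) + C(\pi^*) = O(OPT)$. Once a charge of $\Omega(n)$ per inverted pair is established, $d(\pi,\pi^*) = O(OPT/n)$ follows by dividing the budget by $n$.

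The hard part will be making this charging robust to local reversals at the triple level: a triple whose three vertices appear in reversed relative order in $\pi$ versus $\pi^*$ can be satisfied in both rankings and hence must \emph{not} be charged, even though it contains inverted pairs. A purely combinatorial per-pair count therefore fails, and the charge must instead be tied to the PTAS's absolute-position windows, which rule out such reversals for the vertices the algorithm commits. I expect this reconciliation between the clean ``density times displacement'' intuition and the symmetry-breaking guaranteed by the PTAS to be the main obstacle, and the step where the bound genuinely depends on invoking the algorithm's construction rather than a generic property of low-cost rankings.
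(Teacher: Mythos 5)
The paper offers no proof of this theorem: it is imported wholesale from \cite{Karpinski09betweenness} with only the remark that the analysis of the PTAS there ``supports'' it, and your plan --- run that PTAS with a constant error parameter to get the factor-$2$ cost bound and the $O(1)$-sized candidate set, then reopen its analysis to extract per-vertex placement guarantees and convert footrule to Kendall-Tau distance --- is exactly that route. Your closing self-diagnosis is also the right one: since betweenness cost is invariant under reversal, no charging of $\Omega(n)$ per inverted pair against the budget $C(\pi)+C(\pi^*)$ can succeed as a generic statement about low-cost rankings, and in the cited analysis the $O(OPT/n)$ bound indeed comes from the fragility-based single-vertex-move lower bound (the analogue of Lemma~\ref{lem:fragile}) applied to the absolute positions the sampling-based placement commits to --- so the intermediate ``divide the budget by $n$'' paragraph should simply be discarded in favor of the per-vertex argument you arrive at in the end.
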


\begin{algorithm}[t]
Input: Vertex set $V$
\begin{algorithmic}[1]
 \STATE Use the Algorithm from Theorem \ref{thm:nearby} to construct a set of rankings $\Pi$
 \STATE Let $\pi^{good}$ be the ranking from $\Pi$ with lowest cost
 \FOR{each $\pi^1 \in \Pi$}
 \IF{$C(\pi^1) \le 2 C(\pi^{good})$}
  \STATE Set $r(v) = \alpha_1\sqrt{C(\pi^1) / n} + \alpha_2 b(\pi^1,v,\pi^1(v))/n$ for all $v \in V$, where $\alpha_1$ and $\alpha_2$ are absolute constants.
  \STATE Use dynamic programming (see Lemma \ref{lem:DP}) to find the optimal ranking $\pi^2$ with $|\pi^2(v) - \pi^1(v)| \le r(v)$ for all $v$.
 \ENDIF
\ENDFOR
\STATE Return the best of the $\pi^2$ rankings.
\end{algorithmic}

\caption{Our algorithm for \betTour. The runtime is $n^{O(1)} 2^{O(\sqrt{OPT/n})}$. }
\label{alg:exactGen}
\end{algorithm}

\subsection{Analysis}  \label{sec:betAnalysis}

The following two lemmas are given in \cite{Karpinski09betweenness} in more generality. We give simplified proofs here for readability.

\begin{lemma}[\cite{Karpinski09betweenness}]\label{lem:bChangeFirst}
For any rankings $\pi$ and $\pi'$ over vertex set $V$, vertex $v \in V$ and $p \in \R$ we have
\begin{align*}
|b(\pi, v, p) - b(\pi', v, p)|
&\le 3(n-1)\sqrt{d(\pi, \pi')}
.\end{align*}
\end{lemma}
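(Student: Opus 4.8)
The plan is to mimic the two-set argument used for \fast{} in Lemma~\ref{lem:fastSVMlandscape}, but now accounting for the fact that each constraint involves three vertices rather than two. Fix $v \in V$, a position $p \in \R$, and rankings $\pi, \pi'$. The quantity $b(\pi, v, p)$ sums $c(Q \mapsto \sigma \bp v \mapsto p)$ over pairs $Q = \set{a,b}$ of vertices distinct from $v$. The key observation is that the contribution of a particular pair $Q$ to $b(\pi, v, p)$ versus $b(\pi', v, p)$ can only differ if the combinatorial configuration of $\set{a, b, v}$ at position $p$ changes between $\pi$ and $\pi'$ — that is, if at least one of $a, b$ \emph{crosses} $p$ when passing from $\pi$ to $\pi'$.

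First I would partition the offending pairs by how they interact with the crossing sets. As in Lemma~\ref{lem:fastSVMlandscape}, define $L = \setof{u \in V \sm \set{v}}{\pi(u) < p < \pi'(u)}$ and $R = \setof{u \in V \sm \set{v}}{\pi'(u) < p < \pi(u)}$, the vertices crossing $p$ from left to right and right to left respectively. If neither element of a pair $Q$ lies in $L \cup R$, then both elements sit on the same side of $p$ in $\pi$ as in $\pi'$, so the relative betweenness configuration of $\set{a,b,v}$ at position $p$ is unchanged and the term cancels. Hence only pairs $Q$ with $Q \cap (L \cup R) \ne \emptyset$ can contribute, and each such term is bounded by $1$ in absolute value since $c$ takes values in $[0,1]$. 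The number of such pairs is at most $|L \cup R| \cdot (n - 1)$, because one element is chosen from $L \cup R$ (at most $|L|+|R|$ choices) and the other from the remaining at most $n-1$ vertices.

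Next I would reuse the counting bound from Lemma~\ref{lem:fastSVMlandscape}: the bijectivity of $\pi$ and $\pi'$ forces $|L| = |R|$, and since all of $L$ precedes all of $R$ in $\pi$ while the reverse holds in $\pi'$, every pair in $L \times R$ is inverted, giving $d(\pi,\pi') \ge |L||R| = |L|^2$, whence $|L| = |R| \le \sqrt{d(\pi,\pi')}$ and $|L \cup R| \le 2\sqrt{d(\pi,\pi')}$. Combining with the previous paragraph yields $|b(\pi,v,p) - b(\pi',v,p)| \le 2\sqrt{d(\pi,\pi')}\,(n-1)$, which already beats the claimed $3(n-1)\sqrt{d(\pi,\pi')}$; the looser constant $3$ presumably absorbs a small amount of double-counting slack or an off-by-one in the pair count.

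The main obstacle I anticipate is making the cancellation claim fully rigorous: I must verify carefully that when neither element of $Q$ crosses $p$, the term $c(Q \mapsto \sigma \bp v \mapsto p)$ is genuinely identical under $\pi$ and $\pi'$. Because $c$ depends only on the \emph{ranking} (relative order) of the three vertices $\set{a,b,v}$ together with the inserted position $p$, and the left/right status of both $a$ and $b$ relative to $p$ is preserved, the induced three-point ranking is unchanged — so the terms cancel exactly. Pinning down this configuration-invariance, together with the bookkeeping on how many pairs survive, is where the real care is needed; the arithmetic afterward is routine.
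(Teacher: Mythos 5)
There is a genuine gap, and it sits exactly where you suspected: the ``configuration-invariance'' claim is false as stated. You argue that if neither element of $Q=\set{a,b}$ lies in $L\cup R$, then the term $c(Q \mapsto \sigma \bp v \mapsto p)$ cancels, because the left/right status of $a$ and $b$ relative to $p$ is preserved. But the value of $c$ depends on the full ranking of the three points $a$, $b$, and $v$-at-$p$, and that ranking also depends on the \emph{relative order of $a$ and $b$ themselves}, which can flip between $\pi$ and $\pi'$ even when neither vertex crosses $p$. Concretely: suppose the designated vertex of the betweenness constraint on $\set{a,b,v}$ is $a$, both $a$ and $b$ lie to the left of $p$ in both rankings, $a$ precedes $b$ in $\pi$, and $b$ precedes $a$ in $\pi'$. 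Then the induced order is $a,b,v$ under $\pi$ (constraint violated) but $b,a,v$ under $\pi'$ (constraint satisfied), so the term does not cancel. The paper's proof explicitly lists this as a second case --- a pair $\set{u,u'}$ contributes differently if $\one{\pi(u)<\pi(u')} \ne \one{\pi'(u)<\pi'(u')}$ --- and the number of such pairs is exactly $d(\pi,\pi')$. This yields the bound $|b(\pi,v,p)-b(\pi',v,p)| \le 2\sqrt{d(\pi,\pi')}\,(n-2) + d(\pi,\pi')$, and the extra term is then absorbed via $d(\pi,\pi') \le \sqrt{d(\pi,\pi')}\sqrt{n(n-1)/2} \le (n-2)\sqrt{d(\pi,\pi')}$, which is precisely where the constant $3$ comes from. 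The fact that your bound came out as $2(n-1)\sqrt{d(\pi,\pi')}$, strictly better than the stated lemma, was the warning sign: the missing $(n-2)\sqrt{d(\pi,\pi')}$ accounts for the swapped pairs you dropped. The rest of your argument (the $|L|=|R|\le\sqrt{d(\pi,\pi')}$ counting and the $(n-1)$ factor for pairs meeting $L\cup R$) matches the paper and is fine; you need only add the second case and its bound to complete the proof.
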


\begin{proof}
Fix $\pi$, $\pi'$, $v$,  and $p$. 
As in the proof of Lemma \ref{lem:fastSVMlandscape}, consider the sets of vertices $L=\setof{u \in V \sm \set{v}}{\pi(u)<p<\pi'(u)}$ and $R=\setof{u \in V \sm \set{v}}{\pi'(u)<p<\pi(u)}$.
From the definition of $b$ we see that a constraint $\set{u,u',v}$ contributes identically to $b(\pi, v, p)$ and $b(\pi', v, p)$ unless either:
\begin{enumerate}
\item $\set{u,u'}$ and $(L \cup R)$ have a non-empty intersection (or)
\item $\one{\pi(u) < \pi(u')} \ne \one{\pi'(u) < \pi'(u')}$.
\end{enumerate}

In the proof of Lemma \ref{lem:fastSVMlandscape} we showed that $|L|=|R|\le \sqrt{d(\pi,\pi')}$.
We can therefore bound
\begin{equation}
|b(\pi, v, p) - b(\pi', v, p)| \le (2 \sqrt{d(\pi,\pi')})(n-2) + d(\pi,\pi') \label{eq:hello}
.\end{equation}
Clearly $d(\pi,\pi') \le n(n-1)/2$, hence $d(\pi,\pi') =(\sqrt{d(\pi,\pi')})^2 \le \sqrt{d(\pi,\pi')} \sqrt{\frac{n(n-1)}{2}} \le (n-2)\sqrt{d(\pi,\pi')}$ for sufficiently large $n$.  Substituting this inequality into the second term of (\ref{eq:hello}) proves the Lemma.
\end{proof}

\begin{lemma}[\cite{Karpinski09betweenness}]\label{lem:fragile} Let $\pi$ be a ranking of $V$, $|V|=n$, $v\in V$ be a vertex and $p,p' \in \R$.
Let $B$ be the set of vertices (excluding $v$) between $p$ and $p'$ in $\pi$. 
Then
$b(\pi, v, p) + b(\pi, v, p') \ge \frac{(n-2)|B|}{2}$. 
\end{lemma}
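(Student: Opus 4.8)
The plan is to expand both $b$ terms into their defining sums over pairs and then argue pair by pair. Writing
\[
b(\pi,v,p)+b(\pi,v,p') \;=\; \sum_{Q}\Big[\,c(Q\mapsto\pi\bp v\mapsto p)+c(Q\mapsto\pi\bp v\mapsto p')\,\Big],
\]
where the sum ranges over all size-$2$ subsets $Q\subseteq V\sm\set{v}$, I would isolate those pairs that ``straddle'' the gap between $p$ and $p'$. Assume without loss of generality that $p<p'$, so that $B$ is exactly the set of vertices with $p<\pi(w)<p'$. The combinatorial heart of the argument (this is really the fragility of betweenness) is the per-pair claim: every pair $Q$ with $Q\cap B\neq\emptyset$ contributes at least $1$ to the bracketed sum, while the remaining pairs contribute at least $0$ since each $c\ge 0$.

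To prove the per-pair claim I would argue the contrapositive: if both $c(Q\mapsto\pi\bp v\mapsto p)=0$ and $c(Q\mapsto\pi\bp v\mapsto p')=0$, then $Q\cap B=\emptyset$. Since betweenness costs are $\set{0,1}$-valued, a vanishing cost means the designated vertex of the triple $Q\cup\set{v}$ lies strictly between the other two in the corresponding ordering. Writing $Q=\set{u,u'}$, I would split on which of the three vertices is designated. When $v$ is designated, $v$ must lie between $u$ and $u'$ both when placed at $p$ and when placed at $p'$; a short position argument (a vertex in $B$ would have to sit simultaneously above $p$ and below $p'$ while its partner sits below $p$ and above $p'$) shows this forces $Q\cap B=\emptyset$. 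When instead $u$ (say) is designated, $u$ must lie between $u'$ and $v$ at both placements, which is likewise incompatible with either $u$ or $u'$ lying in $B$. I expect this case analysis --- specifically the placements in which $v$ is \emph{not} the designated vertex --- to be the main obstacle, because the clean ``moving $v$ past a single vertex toggles the cost'' intuition only applies when $v$ is the designated vertex; the other two designations must be checked by hand.

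Granted the per-pair bound, I would finish by a double-counting estimate that produces the factor of $2$. Each vertex $w\in B$ belongs to exactly $n-2$ pairs $Q\subseteq V\sm\set{v}$, so $\sum_{w\in B}(n-2)=|B|(n-2)$, and this quantity equals $\sum_{Q}|Q\cap B|$, counting each contributing pair $|Q\cap B|\in\set{1,2}$ times. Because every pair with $Q\cap B\neq\emptyset$ has bracketed value at least $1$, we have $|Q\cap B|\le 2\le 2\cdot(\text{bracket})$ for such pairs, whence
\[
|B|(n-2)=\sum_{Q:\,Q\cap B\neq\emptyset}|Q\cap B|\;\le\;2\sum_{Q:\,Q\cap B\neq\emptyset}(\text{bracket})\;\le\;2\big(b(\pi,v,p)+b(\pi,v,p')\big).
\]
Rearranging gives exactly $b(\pi,v,p)+b(\pi,v,p')\ge \frac{(n-2)|B|}{2}$. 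The only fact used beyond the per-pair claim is that a size-two set meets $B$ in at most two vertices, which is what yields the factor of $2$; no bound on $|B|$ beyond $|B|\le n-1$ is needed.
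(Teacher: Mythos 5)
Your proposal is correct and follows essentially the same route as the paper: expand the sum over pairs $Q$, use the fragility of betweenness to argue that every pair meeting $B$ contributes at least $1$ to the bracketed sum, and count such pairs via double counting (each $w\in B$ lies in $n-2$ pairs, each pair counted at most twice), yielding the bound $|B|(n-2)/2$. The paper simply asserts the per-pair claim as an observation, whereas you spell out the designated-vertex case analysis; that is added detail, not a different argument.
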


\begin{proof}
By definition
\begin{equation}
b(\pi, v, p) + b(\pi, v, p') = \sum_{Q : \cdots} \left[ c(Q \mapsto \pi \bp v \mapsto p)+c(Q \mapsto \pi \bp v \mapsto p') \right] \label{eqn:pen2}
\end{equation}
where the sum is over sets $Q \subseteq V \sm \set{v}$ of $2$ vertices. Observe that betweenness tournament has a special property: the quantity in brackets in (\ref{eqn:pen2}) is at least 1 for every $Q$ that has at least one vertex between $p$ and $p'$ in $\pi$. There are at least $|B| (n-2)/2$ such sets.
\end{proof}

\begin{lemma}\label{lem:feCloseGen}
During the iteration of Algorithm~\ref{alg:exactGen} that considers the ranking with $d(\pi^1,\pi^*)=O(OPT/n)$ and $C(\pi^1) \le 2C(\pi^*)$ guaranteed by Theorem \ref{thm:nearby} we have  $|\pi^*(v) - \pi^1(v)| \le r(v)$ for all $v \in V$.
\end{lemma}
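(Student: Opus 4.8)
The plan is to mirror the structure of the proof of Lemma~\ref{lem:feClose}, substituting the betweenness analogues of the two key inequalities. The target is an upper bound on $|\pi^*(v) - \pi^1(v)|$, and the natural bridge between this positional quantity and the $b(\cdot)$ cost functions is Lemma~\ref{lem:fragile}: taking $p = \pi^*(v) + 1/2$ and $p' = \pi^1(v) + 1/2$, the set $B$ of vertices strictly between these two positions in $\pi^1$ has size essentially $|\pi^*(v) - \pi^1(v)|$ (up to an additive constant from the half-integer offsets and the exclusion of $v$ itself). Thus Lemma~\ref{lem:fragile} gives roughly $\frac{(n-2)}{2}|\pi^*(v)-\pi^1(v)| \le b(\pi^1, v, \pi^*(v)+1/2) + b(\pi^1, v, \pi^1(v)+1/2)$. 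This is the betweenness replacement for the line ``$w_{uv}+w_{vu}=1$'' step in Lemma~\ref{lem:feClose}, and crucially it is where the extra factor of $n$ in the denominator of $r(v)$ comes from.

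Next I would bound $b(\pi^1, v, \pi^*(v)+1/2)$ by transferring to $\pi^*$, absorbing the cost of the transfer, and then using optimality. Concretely I would use Lemma~\ref{lem:bChangeFirst} to write $b(\pi^1, v, \pi^*(v)+1/2) \le b(\pi^*, v, \pi^*(v)+1/2) + 3(n-1)\sqrt{d(\pi^1,\pi^*)}$, then invoke local optimality of $\pi^*$ to replace $b(\pi^*, v, \pi^*(v)+1/2)$ by $b(\pi^*, v, \pi^1(v)+1/2)$ (moving $v$ to its position in $\pi^*$ cannot beat $\pi^*$'s own cost), and finally apply Lemma~\ref{lem:bChangeFirst} once more to return to $\pi^1$. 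The combined effect is the chain
\begin{align*}
b(\pi^1, v, \pi^*(v)+1/2)
&\le b(\pi^1, v, \pi^1(v)+1/2) + 6(n-1)\sqrt{d(\pi^1,\pi^*)}
.\end{align*}
Substituting back, the fragility inequality yields $\frac{(n-2)}{2}|\pi^*(v)-\pi^1(v)| \lesssim 2\,b(\pi^1, v, \pi^1(v)+1/2) + 6(n-1)\sqrt{d(\pi^1,\pi^*)}$, and dividing through by $(n-2)/2$ produces a bound of the form $|\pi^*(v)-\pi^1(v)| \le O(\sqrt{d(\pi^1,\pi^*)}) + O(b(\pi^1,v,\pi^1(v))/n)$.

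To finish I would substitute the Theorem~\ref{thm:nearby} guarantee $d(\pi^1,\pi^*) = O(OPT/n)$, which turns the first term $O(\sqrt{d(\pi^1,\pi^*)})$ into $O(\sqrt{OPT/n})$, matching the shape $\alpha_1\sqrt{C(\pi^1)/n}$ of the first term in $r(v)$ once we note $C(\pi^1) = \Theta(OPT)$. The second term matches $\alpha_2 b(\pi^1,v,\pi^1(v))/n$ directly. The argument then amounts to checking that the absolute constants $\alpha_1,\alpha_2$ in the definition of $r(v)$ are chosen large enough to dominate the constants accumulated above, which holds for a suitable fixed choice.

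The main obstacle I anticipate is bookkeeping around the half-integer positions and the $(n-2)$ versus $(n-1)$ versus $n$ discrepancies: Lemma~\ref{lem:fragile} delivers a factor $(n-2)/2$ while Lemma~\ref{lem:bChangeFirst} carries $3(n-1)$, and these must be reconciled so that dividing the fragility lower bound by $(n-2)/2$ leaves a clean $O(\sqrt{d(\pi^1,\pi^*)})$ rather than something growing with $n$. This works precisely because the $(n-1)$ from Lemma~\ref{lem:bChangeFirst} cancels against the $(n-2)$ from Lemma~\ref{lem:fragile}, leaving an $n$-independent coefficient on the square-root term; verifying that this cancellation is exact (and that the $|B| \ge |\pi^*(v)-\pi^1(v)| - O(1)$ estimate is valid, handling the boundary and the removal of $v$) is the delicate part, though it is routine once set up carefully.
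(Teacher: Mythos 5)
Your proposal is correct and follows essentially the same route as the paper's own proof: Lemma~\ref{lem:fragile} converts the positional gap into a sum of two $b$-values, Lemma~\ref{lem:bChangeFirst} transfers between $\pi^1$ and $\pi^*$ at cost $O(n\sqrt{d(\pi^1,\pi^*)}) = O(\sqrt{nOPT})$, local optimality of $\pi^*$ swaps the position, and dividing by $(n-2)/2$ matches the definition of $r(v)$. The only difference is cosmetic (you keep $\sqrt{d(\pi^1,\pi^*)}$ symbolic until the end, whereas the paper substitutes the Theorem~\ref{thm:nearby} bound upfront), and your cautious remarks about the half-integer offsets and the $(n-1)$ versus $(n-2)$ factors resolve exactly as you anticipate.
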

\begin{proof}
By Lemma \ref{lem:bChangeFirst} and Theorem \ref{thm:nearby} we have 
\begin{equation}
|b(\pi^*, v, j + 1/2) - b(\pi^1, v, j +1/2)| = O( n\sqrt{OPT / n}) \label{eqn:feCloseGen}
\end{equation}
for any $j \in \Z$.

Fix $v \in V$. We conclude
\begin{align*}
\lefteqn{|\pi^*(v) - \pi^1(v)|\frac{n-2}{2}} \\
&\le b(\pi^1, v, \pi^1(v) + 1/2) + b(\pi^1, v, \pi^*(v) + 1/2) && \text{(Lemma \ref{lem:fragile})}\\
 &\le b(\pi^*, v, \pi^*(v) + 1/2) + O(\sqrt{nOPT}) + b(\pi^1, v, \pi^1(v) + 1/2)&& \text{(By (\ref{eqn:feCloseGen}))} \\
 &\le b(\pi^*, v, \pi^1(v) + 1/2) + O(\sqrt{nOPT}) + b(\pi^1, v, \pi^1(v) + 1/2) && \text{(Optimality of }\pi^*\text{)} \\
&\le O( \sqrt{nOPT}) + 2b(\pi^1, v, \pi^1(v) + 1/2) && \text{(By (\ref{eqn:feCloseGen}))} \\
& = r(v)\frac{n-2}{2} && \text{(Definition of $r(v)$)}
.\end{align*}
\end{proof}

\begin{lemma}\label{lem:psiSmallGen}
In Algorithm \ref{alg:exactGen} we have  $\max_{j \in \Z} |\setof{v \in V}{|\pi^1(v) - j| \le r(v)}| = O(\sqrt{C(\pi^1) / n})$.
\end{lemma}
\begin{proof}
We proceed analogously to the proof of Lemma \ref{lem:psiSmallFAST}.
Fix $j$. Let $R = \setof{v \in V}{|\pi^1(v) - j| \le r(v)}$, whose cardinality we are trying to bound. We say $v \in V$ is \emph{pricey} if $b(\pi^1,v,\pi^1(v))/n > \sqrt{2C(\pi^1)/n}$. Clearly $3 C(\pi^1) = \sum_v b(\pi^1, v, \pi^1(v)) \ge (\text{number pricey}) n \sqrt{2C(\pi^1)/n}$ hence the number of pricey vertices is at most $3C(\pi^1)/(\sqrt{2 n C(\pi^1)})=O(\sqrt{C(\pi^1)/n})$. All non-pricey vertices in $R$ have $|\pi^1(v) - j| = O(\sqrt{C(\pi^1) / n})$, so $O(\sqrt{C(\pi^1) / n})$ non-pricey vertices are in $R$. We conclude $|R| = O(\sqrt{C(\pi^1) / n})$.
\end{proof}

\begin{lemma}\label{lem:DPGen}
There is a dynamic program for betweenness that finds the optimal ranking $\pi^2$ with $|\pi^2(v) - \pi^1(v)| \le r(v)$ for all $v$, with space and runtime $O(|V|^3 2^{\psi})$ where $\psi = \max_j |\setof{v \in V}{|\pi^1(v) - j| \le r(v)}|$. A divide and conquer variant uses $|V|^{O(\psi)}$ time and $|V|^{O(1)}$ space.
\end{lemma}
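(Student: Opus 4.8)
The plan is to mirror the structure of the dynamic program in Lemma~\ref{lem:DP}, adapting the bookkeeping from a $2$-CSP to a $3$-CSP. As before, I would call a set $S \subseteq V$ \emph{valid} if it contains every $v$ with $\pi^1(v) \le |S| - r(v)$ and no $v$ with $\pi^1(v) > |S| + r(v)$, and call a ranking of a valid set valid if all of its prefixes are valid sets. The same counting argument shows that any two valid sets of a common size $s$ differ only in the presence or absence of the $\psi = \max_j |\setof{v \in V}{|\pi^1(v) - j| \le r(v)}|$ vertices straddling position $s$, so there are at most $|V| \, 2^\psi$ valid sets, and a ranking is valid if and only if $|\pi(v) - \pi^1(v)| \le r(v)$ for all $v$. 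Hence optimizing over valid rankings is exactly the restricted search in the statement.

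Next I would establish the optimal-substructure property and the corresponding recurrence. The key difference from \fast{} is that appending a vertex $v$ to a prefix $S \sm \set{v}$ now incurs not pairwise costs but the cost of all betweenness constraints among triples that involve $v$ together with two earlier vertices. Using the notation already in the excerpt, placing $v$ last in a valid ranking of $S$ means $v$ sits at position $|S|$ above all of $S \sm \set{v}$, so the incremental cost is exactly $b(\sigma, v, |S| + 1/2)$ where $\sigma$ is the (valid) ranking of $S \sm \set{v}$. Because $b$ depends on the relative order of the earlier vertices, this is where care is needed: I would argue that for the purpose of computing $\bar C(S)$ it suffices to store the optimal cost of each valid set $S \sm \set{v}$, since the added constraints all have their designated-or-endpoint vertex $v$ strictly above the two others and thus depend only on which vertices precede $v$, not on their internal order — except for constraints sensitive to that order, which I would fold into the definition of $b$ (the excerpt's $b(\sigma,v,p)$ already sums $c(Q \mapsto \sigma \bp v \mapsto p)$ over pairs $Q$, so it correctly captures this). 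This yields
\[
\bar C(S) = \min_{v \in S : S \sm \set{v} \text{ valid}} \left[ \bar C(S \sm \set{v}) + b(\pi_{S \sm \set{v}}, v, |S| - 1/2) \right],
\]
where $\pi_{S \sm \set{v}}$ is any optimal valid ranking of $S \sm \set{v}$.

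For the complexity bound I would tabulate $\bar C$ over all valid sets. There are $O(|V| \, 2^\psi)$ such sets, each transition ranges over $O(|V|)$ choices of $v$, and evaluating each $b$-term costs $O(|V|^2)$ by summing over pairs $Q$; batching or precomputation keeps the overall time and space within $O(|V|^3 \, 2^\psi)$ as claimed. The divide-and-conquer variant trades the $2^\psi$ table for repeated recomputation along a balanced recursion on the position axis, splitting the valid interval in half and enumerating the $O(|V|^{O(\psi)})$ ways the straddling vertices fall on each side; this uses only $|V|^{O(1)}$ space, paralleling Lemma~\ref{lem:DP} and \cite{Dom06}.

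The main obstacle I expect is the correctness of the recurrence for a \emph{ternary} objective: I must verify that the contribution of every betweenness constraint is counted exactly once as its topmost vertex is appended, and that no constraint spanning three vertices is either double-counted or missed when the prefix order is summarized only by its cost. The crucial point is that each triple has a unique last-placed vertex, so attributing the whole triple's cost to that insertion step partitions the objective correctly; confirming that $b(\sigma, v, \cdot)$ indeed captures precisely the constraints made determinate by fixing $v$'s position relative to $\sigma$ is the step that requires the most care, whereas the set-counting and runtime accounting transfer routinely from Lemma~\ref{lem:DP}.
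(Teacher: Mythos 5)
Your recurrence has a genuine gap: it lacks optimal substructure, and the sentence where you try to justify it is exactly where it breaks. You charge each triple to the insertion of its last-placed vertex $v$, so your increment is $b(\sigma, v, |S|-1/2)$ with $\sigma$ the prefix ranking of $S \sm \set{v}$, and you assert this ``depends only on which vertices precede $v$, not on their internal order.'' For betweenness that is false: take a triple $\set{u,u',v}$ with designated vertex $u$ and with $v$ appended after both $u$ and $u'$; the constraint is satisfied iff $u'$ precedes $u$, so the increment genuinely depends on the relative order of the two earlier vertices. Your escape clause (``except for constraints sensitive to that order, which I would fold into the definition of $b$'') does not repair this, because once the increment depends on the internal order of the prefix you cannot take ``any optimal valid ranking $\pi_{S\sm\set{v}}$'': the ranking of $S\sm\set{v}$ minimizing the interior cost need not be the one minimizing interior cost plus the order-dependent extension cost, so a table indexed by valid sets alone does not determine the optimum. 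This is precisely why the paper states that the dynamic program based on $C(\cdot)$ from the \fast{} case ``does not appear to directly generalize'' to \betTour.

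The paper's fix is a different charging scheme: each constraint is paid for at the step when its \emph{second} vertex enters the prefix, not its last. At that moment the third vertex is known to come after both, so (for an arity-3 ranking CSP) the constraint's cost is already fully determined. Formally, $\kappa(S)$ is the minimum over valid rankings of $S$ of $C'$, the cost of all constraints with at most one vertex outside $S$, and the increment upon appending $v$ is $\sum_{u \in S \sm \set{v}} \sum_{q \in V \sm S} c(u \mapsto 1 \bp v \mapsto 2 \bp q \mapsto 3)$, which depends only on the set $S$ and the appended vertex $v$ --- restoring optimal substructure. Your set-counting argument, the characterization of valid rankings, and the divide-and-conquer remark all transfer correctly from Lemma~\ref{lem:DP}, but the core recurrence needs this reformulation to be sound.
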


\begin{proof}
As in the proof of Lemma \ref{lem:DP} we say that a set $S \subseteq V$ is \emph{valid} if it contains all vertices $v$ with $\pi^1(v) \le |S| - r(v)$ and no vertex $v$ with $\pi^1(v) > |S| + r(v)$. Observe that for any $s \in \Z$ the valid sets of size $s$ are identical except for the presence or absence of at most $\psi$ vertices. Therefore there are at most $n 2^\psi$ valid sets. 

We say that a ranking $\pi$ of valid set $S$ is \emph{valid} if $\setof{v}{\pi(v) \le j}$ is a valid set for all $0 \le j \le |S|$. It is easy to see that a ranking $\pi$ is valid if and only if satisfies $|\pi(v) - \pi^1(v)| \le r(v)$ for all $v$.

The dynamic program based on $C(\cdot)$ that worked for \fast{} does not appear to directly generalize to \betTour. We consider an alternate approach.
For any ranking $\pi$ over $S$ let $C'(\pi)$ denote the portion of the cost shared by all orderings with prefix $\pi$. That is, the cost of all constraints with at most 1 vertex outside $S$. One can easily see the following optimal substructure property: prefixes of an optimal (w.r.t. $C'$) valid ranking are optimal (w.r.t. $C'$) valid rankings themselves.

For any valid set $S$ let $\kappa(S)$ denote the $C'$ cost of the optimal (w.r.t. $C'$) valid ranking of $S$. The recurrence relation is
\[
\kappa(S) = \min_{v \in S : S \sm \set{v} \text{ is valid}} \left[ C'(S \sm \set{v}) + \sum_{u \in S \sm \set{v}}\sum_{q \in V \sm S} c(u \mapsto 1 \bp v \mapsto 2 \bp q \mapsto 3) \right]
.\]
\end{proof}

\begin{proof}[of Theorem~\ref{thm:exactGen}]
Lemmas \ref{lem:DPGen} and \ref{lem:psiSmallGen}, plus the test of the ``if'' in Algorithm  \ref{alg:exactGen}, allow us to bound the runtime and space requirements of the dynamic program used by Algorithm \ref{alg:exactGen} by $n^{O(1)} 2^{O(\sqrt{C(\pi^{good})/n})}$, which is of the correct order since $C(\pi^{good}) \le 2 C(\pi^*)$. The ``for'' loop is over a constant number of options and hence does not impact the runtime.

For correctness we focus on the iteration of Algorithm~\ref{alg:exactGen} that considers the $\pi^1 \in \Pi$ with $d(\pi^1, \pi^*) = O(\sqrt{C(\pi^*) / n})$ and $C(\pi^1) \le 2 C(\pi^*)$ as guaranteed by Theorem \ref{thm:nearby}. Theorem \ref{thm:nearby} ensures $C(\pi^1) \le 2 C(\pi^*) \le 2 C(\pi^{good})$ and hence the ``if'' is passed. By Lemma \ref{lem:feCloseGen} $\pi^*$ is among the orders the dynamic program considers.
\end{proof}

\section*{Acknowledgements}
We would like to thank Venkat Guruswami, Claire Mathieu, Prasad Raghavendra and Alex Samorodnitsky for interesting remarks and discussions.

\nocite{Bredereck09}
\bibliographystyle{abbrvnat}
\bibliography{big_bib}  

\begin{thebibliography}{33}
\providecommand{\natexlab}[1]{#1}
\providecommand{\url}[1]{\texttt{#1}}
\expandafter\ifx\csname urlstyle\endcsname\relax
  \providecommand{\doi}[1]{doi: #1}\else
  \providecommand{\doi}{doi: \begingroup \urlstyle{rm}\Url}\fi

\bibitem[Ailon and Alon(2007)]{Ailon07hardness}
N.~Ailon and N.~Alon.
\newblock Hardness of fully dense problems.
\newblock \emph{Inf. Comput.}, 205\penalty0 (8):\penalty0 1117--1129, 2007.

\bibitem[Ailon et~al.(2008)Ailon, Charikar, and Newman]{Ailon08aggregating}
N.~Ailon, M.~Charikar, and A.~Newman.
\newblock Aggregating inconsistent information: Ranking and clustering.
\newblock \emph{Journal of the ACM}, 55\penalty0 (5):\penalty0 Article No. 23,
  2008.

\bibitem[Alon(2006)]{Alon06}
N.~Alon.
\newblock Ranking tournaments.
\newblock \emph{SIAM J. Discrete Math.}, 20\penalty0 (1):\penalty0 137--142,
  2006.

\bibitem[Alon et~al.(2009)Alon, Lokshtanov, and Saurabh]{Alon09}
N.~Alon, D.~Lokshtanov, and S.~Saurabh.
\newblock Fast {FAST}.
\newblock In \emph{Procs. 36$^{th}$ ICALP, Part I, LNCS}, volume 5555, pages
  49--58, 2009.

\bibitem[Bartholdi et~al.(1989)Bartholdi, Tovey, and Trick]{Bartholdi89}
J.~Bartholdi, III, C.~Tovey, and M.~Trick.
\newblock Voting schemes for which it can be difficult to tell who won the
  election.
\newblock \emph{Social Choice and Welfare}, 6:\penalty0 157--165, 1989.

\bibitem[Bessy et~al.(2009)Bessy, Fomin, Gaspers, Paul, Perez, Saurabh, and
  Thomass\'e]{Bessy09}
S.~Bessy, F.~V. Fomin, S.~Gaspers, C.~Paul, A.~Perez, S.~Saurabh, and
  S.~Thomass\'e.
\newblock Kernels for feedback arc set in tournaments.
\newblock In \emph{FSTTCS 2009: 29$^{th}$ Foundations of Software Technology
  and Theoretical Computer Science}, 2009.

\bibitem[Betzler et~al.(2008)Betzler, Fellows, Guo, Niedermeier, and
  Rosamond]{Betzler08}
N.~Betzler, M.~R. Fellows, J.~Guo, R.~Niedermeier, and F.~A. Rosamond.
\newblock Fixed-parameter algorithms for {K}emeny scores.
\newblock In \emph{AAIM '08: Procs. 4$^{th}$ int'l conf. on Algorithmic Aspects
  in Information and Management}, pages 60--71, 2008.

\bibitem[Betzler et~al.(2009)Betzler, Fellows, Guo, Niedermeier, and
  Rosamond]{Betzler09}
N.~Betzler, M.~R. Fellows, J.~Guo, R.~Niedermeier, and F.~A. Rosamond.
\newblock How similarity helps to efficiently compute {K}emeny rankings.
\newblock In \emph{AAMAS '09: 8$^{th}$ International Conference on Autonomous
  Agents and Multiagent Systems}, pages 657--664, 2009.
\newblock Journal version in Theoretical Computer Science 410 (2009) pp.
  4554--4570.

\bibitem[Borda(1781)]{Borda}
J.~Borda.
\newblock M\'emoire sur les \'elections au scrutin.
\newblock \emph{Histoire de l'Acad\'emie Royale des Sciences}, 1781.

\bibitem[Braverman and Mossel(2008)]{Braverman08}
M.~Braverman and E.~Mossel.
\newblock Noisy sorting without resampling.
\newblock In \emph{Procs. 19$^{th}$ ACM-SIAM SODA}, pages 268--276, 2008.

\bibitem[Bredereck(2009)]{Bredereck09}
R.~Bredereck.
\newblock Fixed-parameter algorithms for computing {K}emeny scores---theory and
  practice.
\newblock Technical report, Studienarbeit, Institut f\"ur Informatik,
  Friedrich-Schiller-Universit\"at Jena, Germany, 2009.
\newblock Also available as arXiv:1001.4003v1.

\bibitem[Charbit et~al.(2007)Charbit, Thomasse, and Yeo]{Charbit07}
P.~Charbit, S.~Thomasse, and A.~Yeo.
\newblock The minimum feedback arc set problem is {NP}-hard for tournaments.
\newblock \emph{Combinatorics, Probability and Computing}, 16:\penalty0 1--4,
  2007.

\bibitem[Charikar et~al.(2009)Charikar, Guruswami, and Manokaran]{Charikar09}
M.~Charikar, V.~Guruswami, and R.~Manokaran.
\newblock Every {P}ermutation {CSP} of {A}rity 3 is {A}pproximation
  {R}esistant.
\newblock In \emph{24th IEEE CCC}, 2009.

\bibitem[Chor and Sudan(1998)]{Chor98}
B.~Chor and M.~Sudan.
\newblock A geometric approach to betweenness.
\newblock \emph{SIAM J. Discrete Math.}, 11\penalty0 (4):\penalty0 511--523,
  1998.

\bibitem[Cohen et~al.(1999)Cohen, Schapire, and Singer]{Cohen99}
W.~W. Cohen, R.~E. Schapire, and Y.~Singer.
\newblock Learning to order things.
\newblock \emph{J. Artificial Intelligence Research}, 10:\penalty0 243--270,
  1999.

\bibitem[Condorcet(1785)]{Condorcet}
M.~J. Condorcet.
\newblock \emph{Essai sur l'application de l'analyse \`a la probabilit\'e des
  d\'ecisions rendues \`a la pluralit\'e des voix}.
\newblock 1785.
\newblock Reprinted by AMS Bookstore in 1972.

\bibitem[Conitzer(2006)]{Conitzer06a}
V.~Conitzer.
\newblock Computing {S}later rankings using similarities among candidates.
\newblock In \emph{Procs. 21$^{st}$ AAAI}, pages 613--619, 2006.

\bibitem[Conitzer et~al.(2006)Conitzer, Davenport, and Kalagnanam]{Conitzer06b}
V.~Conitzer, A.~Davenport, and J.~Kalagnanam.
\newblock Improved bounds for computing {K}emeny rankings.
\newblock In \emph{Proc. 21$^{st}$ AAAI}, pages 620--626, 2006.

\bibitem[Coppersmith et~al.(2006)Coppersmith, Fleischer, and
  Rudra]{Coppersmith06}
D.~Coppersmith, L.~Fleischer, and A.~Rudra.
\newblock Ordering by weighted number of wins gives a good ranking for weighted
  tournaments.
\newblock In \emph{Procs. 17$^{th}$ ACM-SIAM SODA}, pages 776--782, 2006.

\bibitem[Dom et~al.(2006)Dom, Guo, H\"uffner, Niedermeier, and Tru\ss]{Dom06}
M.~Dom, J.~Guo, F.~H\"uffner, R.~Niedermeier, and A.~Tru\ss.
\newblock Fixed-{P}arameter {T}ractability {R}esults for {F}eedback {S}et
  {P}roblems in {T}ournaments.
\newblock In \emph{LNCS}, volume 3998, pages 320--331. Springer, 2006.

\bibitem[Dwork et~al.(2001)Dwork, Kumar, Naor, and Sivakumar]{Dwork01}
C.~Dwork, R.~Kumar, M.~Naor, and D.~Sivakumar.
\newblock Rank aggegation methods for the web.
\newblock In \emph{Procs. 10$^{th}$ WWW}, pages 613--622, 2001.
\newblock The NP-hardness proof is in the online-only appendix available from
  \url{http://www10.org/cdrom/papers/577/}.

\bibitem[Feige(2009)]{Feige09}
U.~Feige.
\newblock Faster fast (feedback arc set in tournaments).
\newblock arXiv:0911.5094, 2009.

\bibitem[Flum and Grohe(2006)]{Flum06}
J.~Flum and M.~Grohe.
\newblock \emph{Parameterized Complexity Theory}.
\newblock Springer, 2006.

\bibitem[Impagliazzo and Paturi(2001)]{Impagliazzo01}
R.~Impagliazzo and R.~Paturi.
\newblock Which problems have strongly exponential complexity?
\newblock \emph{J. Computer and System Sciences}, 63:\penalty0 512--530, 2001.

\bibitem[Karp(1972)]{Karp72}
R.~Karp.
\newblock Reducibility among combinatorial problems.
\newblock In \emph{Procs. Complexity of Computer Computations}, pages 85--103,
  1972.

\bibitem[Karpinski and Schudy(2009)]{Karpinski09betweenness}
M.~Karpinski and W.~Schudy.
\newblock Approximation schemes for the betweenness problem in tournaments and
  related ranking problems.
\newblock arXiv:0911.2214, 2009.

\bibitem[Kemeny(1959)]{Kemeny59}
J.~Kemeny.
\newblock Mathematics without numbers.
\newblock \emph{Daedalus}, 88:\penalty0 571--591, 1959.

\bibitem[Kemeny and Snell(1962)]{Kemeny62}
J.~Kemeny and J.~Snell.
\newblock \emph{Mathematical models in the social sciences}.
\newblock Blaisdell, New York, 1962.
\newblock Reprinted by MIT press, Cambridge, 1972.

\bibitem[Mathieu and Schudy(2007)]{Mathieu09fast}
C.~Mathieu and W.~Schudy.
\newblock How to {R}ank with {F}ew {E}rrors.
\newblock In \emph{39th ACM STOC}, pages 95--103, 2007.
\newblock {I}n {S}ubmission
  \url{http://www.cs.brown.edu/~ws/papers/fast_journal.pdf}, 2009.

\bibitem[Opatrny(1979)]{Opatrny79}
J.~Opatrny.
\newblock Total ordering problems.
\newblock \emph{SIAM J. Comput.}, 8\penalty0 (1):\penalty0 111--114, 1979.

\bibitem[Saurabh(2009)]{Saurabh09}
S.~Saurabh.
\newblock Chromatic coding and universal (hyper-) graph coloring families.
\newblock In \emph{Parameterized Complexity News}, pages 3--4, June 2009.
\newblock \url{http://mrfellows.net/Newsletters/2009June_FPT_News.pdf}.

\bibitem[Slater(1961)]{Slater61}
P.~Slater.
\newblock Inconsistencies in a schedule of paired comparisons.
\newblock \emph{Biometrika}, 48:\penalty0 303--312, 1961.

\bibitem[Young(1995)]{Young95}
P.~Young.
\newblock Optimal voting rules.
\newblock \emph{The Journal of Economic Perspectives}, 9\penalty0 (1):\penalty0
  51--64, 1995.

\end{thebibliography}


\end{document}